\newtheorem{thm}{Theorem}[section]
\newtheorem{lem}[thm]{Lemma}
\newtheorem{cor}[thm]{Corollary}
\newtheorem{rem}[thm]{Remark}
\begin{document}
\title{Robust Resource Allocation with Joint Carrier Aggregation for Multi-Carrier Cellular Networks}
\author{Haya Shajaiah, Ahmed Abdelhadi, and T. Charles Clancy \\
Bradley Department of Electrical and Computer Engineering\\
Hume Center, Virginia Tech, Arlington, VA, 22203, USA\\
\{hayajs, aabdelhadi, tcc\}@vt.edu
\thanks{H. Shajaiah, A. Abdelhadi, and C. Clancy are with the Hume Center for National Security and Technology, Virginia Tech, Arlington,
VA, 22203 USA e-mail: \{hayajs, aabdelhadi, tcc\}@vt.edu. This paper is an extension of IEEE ICNC Workshop CCS 2015 paper titled, "An Optimal Resource Allocation with Joint Carrier Aggregation in 4G-LTE". This paper considers the case of scarce resources with respect to the number of users and their traffic. The robust distributed resource allocation algorithm presented in this paper solves the drawback in the algorithm presented in the conference paper, by preventing the fluctuations in the resource allocation process in the case of scarce resources.}
}


\maketitle
\begin{abstract}
In this paper, we present a novel approach for robust optimal resource allocation with joint carrier aggregation to allocate multiple carriers resources optimally among users with elastic and inelastic traffic in cellular networks. We use utility proportional fairness allocation policy, where the fairness among users is in utility percentage of the application running on the user equipment (UE). Each UE is assigned an application utility function based on the type of its application. Our objective is to allocate multiple carriers resources optimally among users subscribing for mobile services. In addition, each user is guaranteed a minimum quality of service (QoS) that varies based on the user's application type.
We present a robust algorithm that solves the drawback in the algorithm presented in \cite{Ahmed_Utility4} by preventing the fluctuations in the resource allocation process, in the case of scarce resources, and allocates optimal rates for both high-traffic and low-traffic situations.
Our distributed resource allocation algorithm allocates an optimal rate to each user from all carriers in its range while providing the minimum price for the allocated rate. In addition, we analyze the convergence of the algorithm with different network traffic densities and show that our algorithm provides traffic dependent pricing for network providers. Finally, we present simulation results for the performance of our resource allocation algorithm.
\end{abstract}
\begin{keywords}
Optimal Resource Allocation; Joint Carrier Aggregation; Utility Proportional Fairness; Elastic Traffic, Inelastic Traffic
\end{keywords}
\providelength{\AxesLineWidth}       \setlength{\AxesLineWidth}{0.5pt}%
\providelength{\plotwidth}           \setlength{\plotwidth}{8cm}
\providelength{\LineWidth}           \setlength{\LineWidth}{0.7pt}%
\providelength{\MarkerSize}          \setlength{\MarkerSize}{3pt}%
\newrgbcolor{GridColor}{0.8 0.8 0.8}%
\newrgbcolor{GridColor2}{0.5 0.5 0.5}%
\section{Introduction}\label{sec:intro}

%
In recent years, the number of mobile subscribers and their traffic have increased rapidly. Mobile subscribers are currently running multiple applications, simultaneously, on their smart phones that require a higher bandwidth and make users so limited to the carrier resources. Network providers are now offering multiple services such as multimedia telephony and mobile-TV \cite{QoS_3GPP}. More spectrum is required to meet these demands \cite{Carrier_Agg_1}. However, it is difficult to provide the required resources with a single frequency band due to the scarcity of the available radio spectrum. Therefore, aggregating different carriers' frequency bands is needed to utilize the radio resources across multiple carriers and allow a scalable expansion of the effective bandwidth delivered to the user terminal, leading to interband non-contiguous carrier aggregation \cite{Carrier_Agg_2}.

Carrier aggregation (CA) is one of the most distinct features of 4G systems including Long Term Evolution Advanced (LTE Advanced). Given the fact that LTE requires wide carrier bandwidths to utilize such as $10$ and $20$ MHz, CA needs to be taken into consideration when designing the system to overcome the spectrum scarcity challenges. With the CA being defined in \cite{work-item}, two or more component carriers (CCs) of the same or different bandwidths can be aggregated to achieve wider transmission bandwidths between the evolve node B (eNodeB) and the UE. An overview of CA framework and cases is presented in \cite{CA-framework}. Many operators are willing to add the CA feature to their plans across a mixture of macro cells and small cells. This will provide capacity and performance benefits in areas where small cell coverage is available while enabling network operators to provide robust mobility management on their macro cell networks.

Increasing the utilization of the existing spectrum can significantly improve network capacity, data rates and user experience. Some spectrum holders such as government users do not use their entire allocated spectrum in every part of their geographic boundaries most of the time. Therefore, the National Broadband Plan (NBP) and the findings of the President's Council of Advisors on Science and Technology (PCAST) spectrum study have recommended making the under-utilized federal spectrum available for secondary use \cite{PCAST}. Spectrum sharing enables wireless systems to harvest underutilized swathes of spectrum, which would vastly increase the efficiency of spectrum usage. Making more spectrum available can provide significant gain in mobile broadband capacity only if those resources can be aggregated efficiently with the existing commercial mobile system resources.

This non-contiguous carrier aggregation task is a challenging. The challenges are both in hardware implementation and joint optimal resource allocation. Hardware implementation challenges are in the need for multiple oscillators, multiple RF chains, more powerful signal processing, and longer battery life \cite{RebeccaThesis}. In order to allocate different carriers resources optimally among mobile users in their coverage areas, a distributed resource allocation algorithm between the UEs and the eNodeBs is needed.

A multi-stage resource allocation (RA) with carrier aggregation algorithms are presented in \cite{Haya_Utility1,Haya_Utility3,Haya_Utility6}. The algorithm in \cite{Haya_Utility1} uses utility proportional fairness approach to allocate the primary and the secondary carriers resources optimally among mobile users in their coverage area. The primary carrier first allocates its resources optimally among users in its coverage area. The secondary carrier then starts allocating optimal rates to users in its coverage area based on the users applications and the rates allocated to them by the primary carrier. A RA with CA optimization problem is presented in \cite{Haya_Utility3} to allocate resources from the LTE Advanced carrier and the MIMO radar carrier to each UE, in a LTE Advanced cell based on the application running on the UE. A price selective centralized RA with CA algorithm is presented in \cite{Haya_Utility6} to allocate multiple carriers resources optimally among users while giving the user the ability
to select one of the carriers to be its primary carrier and the others to be its secondary carriers. The UE's decision is based on the carrier price per unit bandwidth. However, the multi-stage RA with CA algorithms presented in \cite{Haya_Utility1,Haya_Utility3,Haya_Utility6} guarantee optimal rate allocation but not optimal pricing.

In this paper, we focus on solving the problem of utility proportional fairness optimal RA with joint CA for multi-carrier cellular networks. The RA with joint CA algorithm presented in \cite{Ahmed_Utility4} fails to converge for high-traffic situations due to the fluctuation in the RA process. In this paper, we present a robust algorithm that solves the drawbacks in \cite{Ahmed_Utility4} and allocates multiple carriers resources optimally among UEs in their coverage area for both high-traffic and low-traffic situations. Additionally, our proposed distributed algorithm outperforms the multi-stage RA with CA algorithms presented in \cite{Haya_Utility1,Haya_Utility3,Haya_Utility6} as it guarantees that mobile users are assigned optimal (minimum) price for resources. We formulate the multi-carrier RA with CA optimization problem into a convex optimization framework. We use logarithmic and sigmoidal-like utility functions to represent delay-tolerant and real-time applications, respectively, running on the mobile
users' smart phones \cite{Ahmed_Utility1}. Our model supports both contiguous and non-contiguous carrier aggregation from one or more network providers. During the resource allocation process, our distributed algorithm allocates optimal resources from one or more carriers to provide the lowest resource price for the mobile users. In addition, we use a utility proportional fairness approach that ensures non-zero resource allocation for all users and gives real-time applications priority over delay-tolerant applications due to the nature of their applications that require minimum encoding rates.

\subsection{Related Work}\label{sec:related}

There has been several works in the area of resource allocation optimization to utilize the scarce radio spectrum efficiently.
The authors in \cite{kelly98ratecontrol,Internet_Congestion,Optimization_flow,Fair_endtoend} have used a strictly concave utility function to represent each user's elastic traffic and proposed distributed algorithms at the sources and the links to interpret the congestion control of communication networks. Their work have only focussed on elastic traffic and did not consider real-time applications as it have non-concave utility functions as shown in \cite{fundamental_design}. The authors in \cite{Utility_max-min} and \cite{ Fair_allocation} have argued that the utility function, which represents the user application performance, is the one that needs to be shared fairly rather than the bandwidth. In this paper, we consider using resource allocation to achieve a utility proportional fairness that maximizes the user satisfaction. If a bandwidth proportional fairness is applied through a max-min bandwidth allocation, users running delay-tolerant applications receive larger utilities than users running real-time
applications as real-time applications require minimum encoding rates and their utilities are equal to zero if they do not receive their minimum encoding rates.

The proportional fairness framework of Kelly introduced in \cite{kelly98ratecontrol} does not guarantee a minimum QoS for each user application. To overcome this issue, a resource allocation algorithm that uses utility proportional fairness policy is introduced in \cite{Ahmed_Utility1}. We believe that this approach is more appropriate as it respects the inelastic behavior of real-time applications. The utility proportional fairness approach in \cite{Ahmed_Utility1} gives real-time applications priority over delay tolerant applications when allocating resources and guarantees that no user is allocated zero rate. In \cite{Ahmed_Utility1, Ahmed_Utility2} and \cite{ Ahmed_Utility3}, the authors have presented optimal resource allocation algorithms to allocate single carrier resources optimally among mobile users. However, their algorithms do not support multi-carrier resource allocation. To incorporate the carrier aggregation feature, we have introduced a multi-stage resource allocation using carrier
aggregation in \cite{Haya_Utility1}. In \cite{Haya_Utility2} and \cite{Haya_Utility4}, we present resource allocation with users discrimination algorithms to allocate the eNodeB resources optimally among mobile users with elastic and inelastic traffic. In \cite{Mo_ResourceBlock}, the authors have presented a radio resource block allocation optimization problem using a utility proportional fairness approach. The authors in \cite{Tugba_ApplicationAware} have presented an application-aware resource block scheduling approach for elastic and inelastic
adaptive real-time traffic where users are assigned to resource blocks.

On the other hand, resource allocation for single cell multi-carrier systems have been given extensive attention in recent years \cite{Dual-Decomposition, Resource_allocation, Rate_Balancing}. In \cite{Fair_resource,Design_of_Fair,Fast_Algorithms,Optimal_and_near-optimal}, the authors have represented this challenge in optimization problems. Their objective is to maximize the overall cell throughput with some constraints such as fairness and transmission power. However, transforming the problem into a utility maximization framework can achieve better users satisfaction rather than better system-centric throughput. Also, in practical systems, the challenge is to perform multi-carrier radio resource allocation for multiple cells. The authors in \cite{Downlink_dynamic,Centralized_vs_Distributed} suggested using a distributed resource allocation rather than a centralized one to reduce the implementation complexity. In \cite{Cooperative_Fair_Scheduling}, the authors propose a collaborative scheme in a multiple
base stations (BSs) environment, where each user is served by the BS that has the best channel gain with that user. The authors in \cite{DownlinkRadio} have addressed the problem of spectrum resource allocation in carrier aggregation based LTE Advanced systems, with the consideration of UEs’ MIMO capability and the modulation and coding schemes (MCSs) selection.

\subsection{Our Contributions}\label{sec:contributions}
Our contributions in this paper are summarized as:
\begin{itemize}
\item We consider the RA optimization problem with joint CA presented in \cite{Ahmed_Utility4} that uses utility proportional fairness approach and solves for logarithmic and sigmoidal-like utility functions representing delay-tolerant and real-time applications, respectively.
\item We prove that the optimization problem is convex and therefore the global optimal solution is tractable. In addition, we present a robust distributed resource allocation algorithm to solve the optimization problem and provide optimal rates in high-traffic and low-traffic situations. 
\item Our proposed algorithm outperforms that presented in \cite{Ahmed_Utility4} by preventing the fluctuations in the RA process when the resources are scarce with respect to the number of users. It also outperforms the algorithms presented in \cite{Haya_Utility1,Haya_Utility3,Haya_Utility6} as it guarantees that mobile users receive optimal price for resources.
\item We present simulation results for the performance of our RA algorithm and compare it with the performance of the multi-stage RA algorithm presented in \cite{Haya_Utility1,Haya_Utility3,Haya_Utility6}.
\end{itemize}

The remainder of this paper is organized as follows. Section \ref{sec:Problem_formulation} presents the problem formulation. Section \ref{sec:Proof} proves that the global optimal solution exists and is tractable. In Section \ref{sec:Dual}, we discuss the conversion of the primal optimization problem into a dual problem. Section \ref{sec:Algorithm} presents our distributed resource allocation algorithm with joint carrier aggregation for the utility proportional fairness optimization problem. In Section \ref{sec:conv_analy}, we present convergence analysis for the allocation algorithm and a modification for robustness at peak-traffic hours. In section \ref{sec:sim}, we discuss simulation setup, provide quantitative results along with discussion and compare the performance of the proposed algorithm with the one presented in \cite{Haya_Utility1,Haya_Utility3,Haya_Utility6}. Section \ref{sec:conclude} concludes the paper.

\section{Problem Formulation}\label{sec:Problem_formulation}

We consider LTE mobile system consisting of $K$ carriers eNodeBs with $K$ cells and $M$ UEs distributed in these cells. The rate allocated by the $l^{th}$ carrier eNodeB to $i^{th}$ UE is given by $r_{li}$ where $l =\{1,2, ..., K\}$ and $i = \{1,2, ...,M\}$. Each UE has its own utility function $U_i(r_{1i}+r_{2i}+ ...+r_{Ki})$ that corresponds to the type of traffic being handled by the $i^{th}$ UE. Our objective is to determine the optimal rates that the $l^{th}$ carrier eNodeB should allocate to the nearby UEs. We express the user satisfaction with its provided service using utility functions that represent the degree of satisfaction of the user function with the rate allocated by the cellular network \cite{DL_PowerAllocation} \cite{fundamental_design} \cite{UtilityFairness}. We assume the utility functions $U_i(r_{1i}+r_{2i}+ ...+r_{Ki})$ to be a strictly concave or a sigmoidal-like functions. The utility functions have the following properties:

\begin{itemize}
\item $U_i(0) = 0$ and $U_i(r_{1i}+r_{2i}+ ...+r_{Ki})$ is an increasing function of $r_{li}$ for $l$.
\item $U_i(r_{1i}+r_{2i}+ ...+r_{Ki})$ is twice continuously differentiable in $r_{li}$ for all $l$.
\end{itemize}
In our model, we use the normalized sigmoidal-like utility function, as in \cite{DL_PowerAllocation}, that can be expressed as
\begin{equation}\label{eqn:sigmoid}
U_i(r_{1i}+r_{2i}+ ...+r_{Ki}) = c_i\Big(\frac{1}{1+e^{-a_i(\sum_{l=1}^{K}r_{li}-b_i)}}-d_i\Big)
\end{equation}
where $c_i = \frac{1+e^{a_ib_i}}{e^{a_ib_i}}$ and $d_i = \frac{1}{1+e^{a_ib_i}}$. So, it satisfies $U_i(0)=0$ and $U_i(\infty)=1$. We use the normalized logarithmic utility function, as in \cite{UtilityFairness}, that can be expressed as
\begin{equation}\label{eqn:log}
U_i(r_{1i}+r_{2i}+ ...+r_{Ki}) = \frac{\log(1+k_i\sum_{l=1}^{K}r_{li})}{\log(1+k_ir_{max})}
\end{equation}
where $r_{max}$ is the required rate for the user to achieve 100\% utility percentage and $k_i$ is the rate of increase of utility percentage with allocated rates. So, it satisfies $U_i(0)=0$ and $U_i(r_{max})=1$. 
We consider the utility proportional fairness objective function that is given by
\begin{equation}\label{eqn:utility_fairness}
\underset{\textbf{r}}{\text{max}} \prod_{i=1}^{M}U_i(r_{1i} + r_{2i} + ... + r_{Ki})
\end{equation}
where $\textbf{r} =\{\textbf{r}_1, \textbf{r}_2,..., \textbf{r}_M\}$ and $\textbf{r}_i =\{r_{1i}, r_{2i},..., r_{Ki}\}$. The goal of this resource allocation objective function is to maximize the total system utility while ensuring proportional fairness between utilities (i.e., the product of the utilities of all UEs). This resource allocation objective function inherently guarantees:
\begin{itemize}
 \item non-zero resource allocation for all users. Therefore, the corresponding resource allocation optimization problem provides a minimum QoS for all users.
 \item  priority to users with real-time applications. Therefore, the corresponding resource allocation optimization problem improves the overall QoS for LTE system.
\end{itemize}

The basic formulation of the utility proportional fairness resource allocation problem is given by the following optimization problem:
\begin{equation}\label{eqn:opt_prob_fairness}
\begin{aligned}
& \underset{\textbf{r}}{\text{max}} & & \prod_{i=1}^{M}U_i(r_{1i} + r_{2i} + ... + r_{Ki}) \\
& \text{subject to} & & \sum_{i=1}^{M}r_{1i} \leq R_1, \sum_{i=1}^{M}r_{2i} \leq R_2, ...\\
& & & ...\:\:, \:\:\sum_{i=1}^{M}r_{Ki} \leq R_K,\\
& & & r_{li} \geq 0, \;\;\;l = 1,2, ...,K,\;\; i = 1,2, ...,M
\end{aligned}
\end{equation}
where $R_l$ is the total available rate at the $l^{th}$ carrier eNodeB.

We prove in Section \ref{sec:Proof} that the solution of the optimization problem (\ref{eqn:opt_prob_fairness}) is the global optimal solution.
\section{The Global Optimal Solution}\label{sec:Proof}

In the optimization problem (\ref{eqn:opt_prob_fairness}), since the objective function $\arg \underset{\textbf{r}} \max \prod_{i=1}^{M}U_i(r_{1i}+r_{2i}+ ...+r_{Ki})$ is equivalent to $\arg \underset{\textbf{r}} \max \sum_{i=1}^{M}\log(U_i(r_{1i}+r_{2i}+ ...+r_{Ki}))$, then optimization problem (\ref{eqn:opt_prob_fairness}) can be written as:

\begin{equation}\label{eqn:opt_prob_fairness_mod}
\begin{aligned}
& \underset{\textbf{r}}{\text{max}} & & \sum_{i=1}^{M}\log \Big(U_i(r_{1i} + r_{2i} + ... + r_{Ki})\Big) \\
& \text{subject to} & & \sum_{i=1}^{M}r_{1i} \leq R_1, \sum_{i=1}^{M}r_{2i} \leq R_2, ...\\
& & & ...\:\:, \:\:\sum_{i=1}^{M}r_{Ki} \leq R_K,\\
& & & r_{li} \geq 0, \;\;\;l = 1,2, ...,K,\;\; i = 1,2, ...,M.
\end{aligned}
\end{equation}

\begin{lem}\label{lem:concavity}
The utility functions $\log(U_i(r_{1i} + ... + r_{Ki}))$ in the optimization problem (\ref{eqn:opt_prob_fairness_mod}) are strictly concave functions.
\end{lem}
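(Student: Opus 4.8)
The plan is to reduce the multivariate claim to a one–variable calculus fact. Each composite $\log\big(U_i(r_{1i}+\cdots+r_{Ki})\big)$ depends on $\mathbf{r}_i$ only through the aggregate rate $z_i:=\sum_{l=1}^{K}r_{li}$, which is an affine function of $\mathbf{r}_i$. Since concavity is preserved under composition with an affine map, it suffices to show that the scalar function $z\mapsto \log U_i(z)$ is strictly concave on the interior $z>0$ of its effective domain (the value at $z=0$, where $U_i(0)=0$, being understood in the usual extended–valued sense); strictness then holds in particular along the ``diagonal'' direction of $\mathbf{r}_i$, which is what the later KKT/duality analysis uses. I would then split into the two admissible families allowed by the model — the logarithmic utility~(\ref{eqn:log}) and the sigmoidal–like utility~(\ref{eqn:sigmoid}) — and in each case verify $\frac{d^{2}}{dz^{2}}\log U_i(z)<0$.

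For the logarithmic case, after discarding the additive constant $-\log\log(1+k_i r_{max})$ I would write $\log U_i(z)=\log g(z)+\text{const}$ with $g(z)=\log(1+k_iz)$, and use $\big(\log g\big)''=\big(g''g-(g')^{2}\big)/g^{2}$. With $g'(z)=k_i/(1+k_iz)>0$ and $g''(z)=-k_i^{2}/(1+k_iz)^{2}<0$, the numerator collapses to $-\frac{k_i^{2}}{(1+k_iz)^{2}}\big(\log(1+k_iz)+1\big)$, which is strictly negative for every $z>0$ because $\log(1+k_iz)\ge 0$. This sub-case is essentially routine.

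The sigmoidal–like case is where the real work lies, and I expect it to be the main obstacle. Writing $S(z)=\big(1+e^{-a_i(z-b_i)}\big)^{-1}$ so that $\log U_i(z)=\log c_i+\log\big(S(z)-d_i\big)$, I would invoke the logistic identities $S'=a_iS(1-S)$ and $S''=a_i^{2}S(1-S)(1-2S)$. Then $\frac{d^{2}}{dz^{2}}\log\big(S-d_i\big)=\big[S''(S-d_i)-(S')^{2}\big]/(S-d_i)^{2}$, and factoring the positive quantity $a_i^{2}S(1-S)$ out of the numerator reduces the sign question to $(1-2S)(S-d_i)-S(1-S)<0$, i.e.\ after expansion to $S^{2}-2d_iS+d_i>0$. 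I would dispatch this last inequality by completing the square: $S^{2}-2d_iS+d_i=(S-d_i)^{2}+d_i(1-d_i)>0$, which holds because $0<d_i=\big(1+e^{a_ib_i}\big)^{-1}<1$. Two auxiliary remarks close the gap: $\log\big(S(z)-d_i\big)$ is well defined precisely on $z>0$ since $d_i=S(0)$ and $S$ is strictly increasing, so $S(z)>d_i$ there; and $0<S(z)<1$ always, which is what makes the factored prefactor strictly positive. Combining the two cases with the affine–composition reduction of the first paragraph establishes the lemma.
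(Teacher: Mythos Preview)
Your argument is correct and shares the paper's overall plan---treat the two utility families separately and show the one-variable second derivative of $\log U_i$ is negative---but the execution differs in two places worth noting. For the concave/logarithmic case, the paper argues abstractly: any strictly concave, positive, increasing $U_i$ satisfies $(\log U_i)''=(U_i''U_i-(U_i')^{2})/U_i^{2}<0$ because both numerator contributions are negative; you instead compute explicitly for the specific form~(\ref{eqn:log}). The paper's version covers any concave utility in one stroke, while yours is tailored but perfectly adequate for the model as stated. For the sigmoidal case the roles reverse: the paper simply writes the second partial derivative as a sum of two manifestly negative fractions and stops, whereas your route through the logistic identities $S'=a_iS(1-S)$, $S''=a_i^{2}S(1-S)(1-2S)$ and the reduction to $(S-d_i)^{2}+d_i(1-d_i)>0$ is cleaner and makes the sign transparent without trusting a long formula. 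You are also more careful than the paper about the multivariate strictness issue: since the composite depends on $\mathbf r_i$ only through $\sum_l r_{li}$, the Hessian has rank one and is merely negative semidefinite, so strict concavity holds only along the aggregate direction---which is exactly what the subsequent duality analysis uses. The paper glosses over this and checks only $\partial^{2}/\partial r_{li}^{2}<0$.
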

\begin{proof}
In Section \ref{sec:Problem_formulation}, we assume that all the utility functions of the UEs are strictly concave or sigmoidal-like functions.

In the strictly concave utility function case, recall the utility function properties in Section \ref{sec:Problem_formulation}, the utility function is positive $ U_i(r_{1i} + ... + r_{Ki}) > 0$, increasing and twice differentiable with respect to $r_{li}$. Then, it follows that $\frac{ \partial U_i(r_{1i} + ... + r_{Ki})}{\partial r_{li}} > 0$ and $\frac{\partial^2 U_i(r_{1i}+ ... + r_{Ki})}{\partial r_{li}^2} < 0$. It follows that, the utility function $\log(U_i(r_{1i} + r_{2i} + ... + r_{Ki}))$ in the optimization problem (\ref{eqn:opt_prob_fairness_mod}) have
\begin{equation}\label{eqn:log_first_derivative}
\frac{\partial \log(U_i(r_{1i} + ... + r_{Ki}))}{\partial r_{li}} =  \frac{\frac{\partial U_i}{\partial r_{li}}}{U_i} > 0
\end{equation}
and
\begin{equation}\label{eqn:log_second_derivative}
\frac{\partial ^2\log(U_i(r_{1i} + ... + r_{Ki}))}{\partial r_{li}^2} =  \frac{\frac{\partial^2 U_i}{\partial r_{li}^2}U_i-(\frac{\partial U_i}{\partial r_{li}})^2}{U^2_i} < 0.
\end{equation}
Therefore, the strictly concave utility function $U_i(r_{1i} + r_{2i} + ... + r_{Ki})$ natural logarithm  $\log(U_i(r_{1i} + r_{2i} + ... + r_{Ki}))$ is also strictly concave. It follows that the natural logarithm  of the logarithmic utility function in equation (\ref{eqn:log}) is strictly concave.

In the sigmoidal-like utility function case, the utility function of the normalized sigmoidal-like function is given by equation (\ref{eqn:sigmoid}) as $U_i(r_{1i} + r_{2i} + ... + r_{Ki}) = c_i\Big(\frac{1}{1+e^{-a_i(\sum_{l=1}^{K}r_{li}-b_i)}}-d_i\Big)$. For $0<\sum_{l=1}^{K}r_{li}<\sum_{l=1}^{K}R_l$, we have
\begin{equation*}\label{eqn:sigmoid_bound}
\begin{aligned}
0&<c_i\Big(\frac{1}{1+e^{-a_i(\sum_{l=1}^{K}r_{li}-b_i)}}-d_i\Big)<1\\
d_i&<\frac{1}{1+e^{-a_i(\sum_{l=1}^{K}r_{li}-b_i)}}<\frac{1+c_id_i}{c_i}\\
\frac{1}{d_i}&>{1+e^{-a_i(\sum_{l=1}^{K}r_{li}-b_i)}}>\frac{c_i}{1+c_id_i}\\
0&<1-d_i({1+e^{-a_i(\sum_{l=1}^{K}r_{li}-b_i)}})<\frac{1}{1+c_id_i}\\
\end{aligned}
\end{equation*}
It follows that for $0<\sum_{l=1}^{K}r_{li}<\sum_{l=1}^{K}R_l$, we have the first and second derivative as
\begin{equation*}\label{eqn:sigmoid_derivative}
\begin{aligned}
\frac{\partial}{ \partial r_{li}}\log U_i(r_{1i} + ... + r_{Ki}) =& \frac{a_i d_i e^{-a_i(\sum_{l=1}^{K}r_{li}-b_i)}}{1-d_i(1+e^{-a_i(\sum_{l=1}^{K}r_{li}-b_i)})} \\
\;\;\;&  + \frac{a_ie^{-a_i(\sum_{l=1}^{K}r_{li}-b_i)}}{(1+e^{-a_i(\sum_{l=1}^{K}r_{li}-b_i)})}>0\\
\frac{\partial^2}{\partial r_{li}^2}\log U_i(r_{1i} + ... + r_{Ki}) =& \frac{-a_i^2d_ie^{-a_i(\sum_{l=1}^{K}r_{li}-b_i)}}{c_i\Big(1-d_i(1+e^{-a(\sum_{l=1}^{K}r_{li}-b_i)})\Big)^2} \\
\;\;\;&  + \frac{-a_i^2e^{-a_i(\sum_{l=1}^{K}r_{li}-b_i)}}{(1+e^{-a_i(\sum_{l=1}^{K}r_{li}-b_i)})^2} < 0\\
\end{aligned}
\end{equation*}
Therefore, the sigmoidal-like utility function $U_i(r_{1i}+...+r_{Ki})$ natural logarithm  $\log(U_i(r_{1i}+...+r_{Ki}))$ is strictly concave function. Therefore, all the utility functions in our model have strictly concave natural logarithm.
\end{proof}
\begin{thm}\label{thm:global_soln}
The optimization problem (\ref{eqn:opt_prob_fairness}) is a convex optimization problem and there exists a unique tractable global optimal solution.
\end{thm}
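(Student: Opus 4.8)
The plan is to deduce the statement from Lemma~\ref{lem:concavity} together with standard facts about convex programs. First I would observe that, since $\log$ is strictly increasing, a rate vector $\mathbf{r}$ maximizes $\prod_{i=1}^{M}U_i(\cdot)$ over the feasible set if and only if it maximizes $\sum_{i=1}^{M}\log U_i(\cdot)$ over the same set; hence problems (\ref{eqn:opt_prob_fairness}) and (\ref{eqn:opt_prob_fairness_mod}) are equivalent and it suffices to treat (\ref{eqn:opt_prob_fairness_mod}). By Lemma~\ref{lem:concavity} each summand $\log U_i(r_{1i}+\cdots+r_{Ki})$ is strictly concave, so the objective of (\ref{eqn:opt_prob_fairness_mod}) is a finite sum of concave functions, hence concave. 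The constraint set
\begin{equation*}
\mathcal{F}=\Big\{\mathbf{r}\;:\;\textstyle\sum_{i=1}^{M} r_{li}\le R_l\ \forall l,\ \ r_{li}\ge 0\ \forall l,i\Big\}
\end{equation*}
is an intersection of finitely many closed half-spaces, hence a convex polytope; it is nonempty (it contains $\mathbf{0}$) and bounded (each $r_{li}\in[0,R_l]$), hence compact and convex. Maximizing a concave function over a compact convex set is by definition a convex optimization problem, which is the first assertion; and since the objective is continuous on $\mathcal{F}$, the Weierstrass extreme value theorem gives existence of a maximizer.

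For tractability I would check Slater's condition: any strictly positive rate vector with $\sum_{i} r_{li}<R_l$ for every $l$ lies in the relative interior of $\mathcal{F}$ and such points exist, so strong duality holds and the Karush--Kuhn--Tucker conditions are necessary and sufficient for global optimality. This is exactly what lets the optimum be computed through the dual decomposition of Section~\ref{sec:Dual}, and it is the sense in which the global optimal solution is ``tractable.''

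The delicate point, and the step I expect to require the most care, is uniqueness, because $\log U_i$ is strictly concave only in the \emph{aggregate} argument $r_i:=\sum_{l=1}^{K}r_{li}$, not jointly strictly concave in the per-carrier rates $(r_{1i},\dots,r_{Ki})$. I would handle this by noting that the objective depends on $\mathbf{r}$ only through $(r_1,\dots,r_M)$ and, by the strict second-derivative inequalities established in Lemma~\ref{lem:concavity}, is strictly concave in $(r_1,\dots,r_M)$; moreover the image of $\mathcal{F}$ under the linear map $\mathbf{r}\mapsto(r_1,\dots,r_M)$ is convex. Hence the optimal vector of aggregate user rates $(r_1^{\star},\dots,r_M^{\star})$, equivalently the optimal utilities $U_i$, is unique, and I would phrase the uniqueness claim at this level. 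If one wants uniqueness of the full allocation $\mathbf{r}^{\star}$, it can be recovered by observing that the per-carrier split realizing these aggregates is pinned down once the minimum-price criterion of Section~\ref{sec:Algorithm} is imposed, so the rates returned by the proposed algorithm are themselves unique.
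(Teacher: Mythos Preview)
Your argument follows the paper's proof almost verbatim: invoke Lemma~\ref{lem:concavity} to get strict concavity of each $\log U_i$, observe that problems (\ref{eqn:opt_prob_fairness}) and (\ref{eqn:opt_prob_fairness_mod}) are equivalent, and conclude convexity and hence existence of a tractable global optimum. Your treatment is in fact more careful than the paper's---the paper simply asserts uniqueness by citing a convex-optimization reference, whereas you correctly note that strict concavity holds only in the aggregate rates $r_i=\sum_l r_{li}$ and phrase uniqueness at that level.
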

\begin{proof}
It follows from Lemma \ref{lem:concavity} that for all UEs utility functions are strictly concave. Therefore, the optimization problem (\ref{eqn:opt_prob_fairness_mod}) is a convex optimization problem \cite{Boyd2004}. The optimization problem (\ref{eqn:opt_prob_fairness_mod}) is equivalent to optimization problem (\ref{eqn:opt_prob_fairness}), therefore it is a convex optimization problem. For a convex optimization problem, there exists a unique tractable global optimal solution \cite{Boyd2004}.
\end{proof}

\section{The Dual Problem}\label{sec:Dual}

The key to a distributed and decentralized optimal solution of the primal problem in (\ref{eqn:opt_prob_fairness_mod}) is to convert it to the dual problem similar to \cite{Ahmed_Utility1}, \cite{kelly98ratecontrol} and \cite{Low99optimizationflow}. The optimization problem (\ref{eqn:opt_prob_fairness_mod}) can be divided into two simpler problems by using the dual problem.  We define the Lagrangian
\begin{equation}\label{eqn:lagrangian}
\begin{aligned}
L(\textbf{r},\textbf{p}) = & \sum_{i=1}^{M}\log \Big(U_i(r_{1i} + r_{2i} + ... + r_{Ki})\Big)\\
		   & -p_1(\sum_{i=1}^{M}r_{1i} + z_1 - R_1) - ...\\
		   & - p_K(\sum_{i=1}^{M}r_{Ki} + z_K - R_K)\\
                = &  \sum_{i=1}^{M}\Big({\log(U_i(r_{1i} + r_{2i} + ... + r_{Ki}))-\sum_{l=1}^{K}p_lr_{li}\Big)}\\
                  & + \sum_{l=1}^{K} p_l(R_l-z_l)\\
                = &  \sum_{i=1}^{M}L_i(\textbf{r}_i,\textbf{p}) + \sum_{l=1}^{K} p_l(R_l-z_l)\\
\end{aligned}
\end{equation}
where $z_l\geq 0$ is the $l^{th}$ slack variable and $p_l$ is Lagrange multiplier or the shadow price of the $l^{th}$ carrier eNodeB (i.e. the total price per unit rate for all the users in the coverage area of the $l^{th}$ carrier eNodeB) and $\textbf{p}=\{p_1,p_2,...,p_K\}$. Therefore, the $i^{th}$ UE bid for rate from the $l^{th}$ carrier eNodeB can be written as $w_{li} = p_l r_{li}$ and we have $\sum_{i=1}^{M}w_{li} = p_l \sum_{i=1}^{M}r_{li}$. The first term in equation (\ref{eqn:lagrangian}) is separable in $\textbf{r}_i$. So we have $\underset{\textbf{r}}\max \sum_{i=1}^{M}({\log(U_i(r_{1i}+r_{2i}+...+r_{Ki}))-\sum_{l=1}^{K}p_lr_{li})} = \sum_{i=1}^{M}\underset{{\textbf{r}_i}}\max\big({\log(U_i(r_{1i}+r_{2i}+...+r_{Ki}))-\sum_{l=1}^{K}p_lr_{li}\big)}$.
The dual problem objective function can be written as
\begin{equation}\label{eqn:dual_obj_fn}
\begin{aligned}
D(\textbf{p}) = & \underset{{\textbf{r}}}\max \:L(\textbf{r},\textbf{p}) \\
= &\sum_{i=1}^{M}\underset{{\textbf{r}_i}}\max (L_i(\textbf{r}_i,\textbf{p})) + \sum_{l=1}^{K} p_l(R_l-z_l)
\end{aligned}
\end{equation}
The dual problem is given by
\begin{equation}\label{eqn:dual_problem}
\begin{aligned}
& \underset{{\textbf{p}}}{\text{min}}
& & D(\textbf{p}) \\
& \text{subject to}
& & p_l \geq 0, \;\;\;\;\;l = 1,2, ...,K.
\end{aligned}
\end{equation}
So we have
\begin{equation}\label{eqn:dual_max}
\frac{\partial D(\textbf{p})}{\partial p_l} =  R_l-\sum_{i=1}^{M}r_{li} -z_l = 0
\end{equation}
substituting by $\sum_{i=1}^{M}w_{li} = p_l \sum_{i=1}^{M}r_{li}$ we have
\begin{equation}\label{eqn:dual_new_obj}
p_l = \frac{\sum_{i=1}^{M}w_{li}}{R_l-z_l}.
\end{equation}
Now, we divide the primal problem (\ref{eqn:opt_prob_fairness_mod}) into two simpler optimization problems in the UEs and the eNodeBs. The $i^{th}$ UE optimization problem is given by:
\begin{equation}\label{eqn:opt_prob_fairness_UE}
\begin{aligned}
& \underset{{r_i}}{\text{max}}
& & \log(U_i(r_{1i} + r_{2i} + ... + r_{Ki}))-\sum_{l=1}^{K}p_lr_{li} \\
& \text{subject to}
& & p_l \geq 0\\
& & &  r_{li} \geq 0, \;\;\;\;\; i = 1,2, ...,M, l = 1,2, ...,K.
\end{aligned}
\end{equation}

The second problem is the $l^{th}$ eNodeB optimization problem for rate proportional fairness that is given by:
\begin{equation}\label{eqn:opt_prob_fairness_eNodeB}
\begin{aligned}
& \underset{p_l}{\text{min}}
& & D(\textbf{p}) \\
& \text{subject to}
& & p_l \geq 0.\\
\end{aligned}
\end{equation}
The minimization of shadow price $p_l$ is achieved by the minimization of the slack variable $z_l \geq 0$ from equation (\ref{eqn:dual_new_obj}). Therefore, the maximum utility percentage of the $l^{th}$ eNodeB rate $R_l$ is achieved by setting the slack variable $z_l = 0$. In this case, we replace the inequality in primal problem (\ref{eqn:opt_prob_fairness_mod}) constraints by  equality constraints and so we have $\sum_{i=1}^{M}w_{li} = p_l R_l$. Therefore, we have $p_l = \frac{\sum_{i=1}^{M}w_{li}}{R_l}$ where $w_{li} = p_l r_{li}$ is transmitted by the $i^{th}$ UE to $l^{th}$ eNodeB. The utility proportional fairness in the objective function of the optimization problem (\ref{eqn:opt_prob_fairness}) is guaranteed in the solution of the optimization problems (\ref{eqn:opt_prob_fairness_UE}) and (\ref{eqn:opt_prob_fairness_eNodeB}).

\section{Distributed Optimization Algorithm}\label{sec:Algorithm}

The distributed resource allocation algorithm, in \cite{Ahmed_Utility4}, for optimization problems (\ref{eqn:opt_prob_fairness_UE}) and (\ref{eqn:opt_prob_fairness_eNodeB}) is a modified version of the distributed algorithms in \cite{Ahmed_Utility1, Ahmed_Utility2,Ahmed_Utility3}, \cite{kelly98ratecontrol} and \cite{Low99optimizationflow}, which is an iterative solution for allocating the network resources for a single carrier. The algorithm in \cite{Ahmed_Utility4} allocates resources from multiple carriers simultaneously with utility proportional fairness policy. The algorithm is divided into the $i^{th}$ UE algorithm as shown in Algorithm 1 \cite{Ahmed_Utility4} and the $l^{th}$ eNodeB carrier algorithm as shown in Algorithm 2 \cite{Ahmed_Utility4}. In Algorithm 1 and 2 \cite{Ahmed_Utility4}, the $i^{th}$ UE starts with an initial bid $w_{li}(1)$ which is transmitted to the $l^{th}$ carrier eNodeB. The $l^{th}$ eNodeB calculates the difference between the received bid $w_{li}(n)$ and the previously
received bid $w_{li}(n-1)$ and exits if it is less than a pre-specified threshold $\delta$. We set $w_{li}(0) = 0$. If the value is greater than the threshold, the $l^{th}$ eNodeB calculates the shadow price $p_l(n) = \frac{\sum_{i=1}^{M}w_{li}(n)}{R_l}$ and sends that value to all UEs in its coverage area. The $i^{th}$ UE receives the shadow prices $p_{l}$ from all in range carriers eNodeBs and compares them to find the first minimum shadow price $p_{\min}^{1}(n)$ and the corresponding carrier index $l_1 \in L$ where $L = \{1, 2, ..., K\}$. The $i^{th}$ UE solves for the $l_1$ carrier rate $r_{l_1i}(n)$ that maximizes $\log U_i(r_{1i}+...+r_{Ki}) - \sum_{l=1}^{K}p_l(n)r_{li}$ with respect to $r_{l_1i}$. The rate $r_{i}^{1}(n) = r_{l_1i}(n)$ is used to calculate the new bid $w_{l_1i}(n)=p_{\min}^{1}(n) r_{i}^{1}(n)$. The $i^{th}$ UE sends the value of its new bid $w_{l_1i}(n)$ to the $l_1$ carrier eNodeB. Then, the $i^{th}$ UE selects the second minimum shadow price  $p_{\min}^{2}(n)$ and the corresponding
carrier index $l_2 \in L$. The $i^{th}$ UE solves
for the $l_2$ carrier rate $r_{l_2i}(n)$ that maximizes $\log U_i(r_{1i}+...+r_{Ki}) - \sum_{l=1}^{K}p_l(n)r_{li}$ with respect to $r_{l_2i}$. The rate $r_{l_2i}(n)$ subtracted by the rate from $l_1$ carrier $r_{i}^{2}(n) = r_{l_2i}(n) - r_{i}^{1}(n)$ is used to calculate the new bid $w_{l_2i}(n)=p_{\min}^{2}(n) r_{i}^{2}(n)$ which is sent to $l_2$ carrier eNodeB. In general, the $i^{th}$ UE selects the $m^{th}$  minimum shadow price $p_{\min}^{m}(n)$ with carrier index $l_m \in L$ and solves for the $l_m$ carrier rate $r_{l_mi}(n)$ that maximizes $\log U_i(r_{1i}+...+r_{Ki}) - \sum_{l=1}^{K}p_l(n)r_{li}$ with respect to $r_{l_mi}$. The rate $r_{l_mi}(n)$ subtracted by $l_1, l_2, ..., l_{m-1}$ carriers rates  $r_{i}^{m}(n) = r_{l_mi}(n) - (r_{i}^{1}(n)+r_{i}^{2}(n)+...+r_{i}^{m-1}(n))$ is used to calculate the new bid $w_{l_mi}(n)=p_{\min}^{m}(n) r_{i}^{m}(n)$ which is sent to $l_m$ carrier eNodeB. This process is repeated until $|w_{li}(n) -w_{li}(n-1)|$ is less than the threshold $\delta$ for all $l$
carriers.

The distributed algorithm in \cite{Ahmed_Utility4} is set to avoid the situation of allocating zero rate to any user (i.e. no user is dropped). This is inherited from the utility proportional fairness policy in the optimization problem, similar to \cite{Ahmed_Utility1}, \cite{Ahmed_Utility2} and \cite{Ahmed_Utility3}. In addition, the UE chooses from the nearby carriers eNodeBs the one with the lowest shadow price and starts requesting bandwidth from that carrier eNodeB. If the allocated rate is not enough or the price of the resources increases due to high demand on that carrier eNodeB resources from other UEs, the UE switches to another nearby eNodeB carrier with a lower resource price to be allocated the rest of the required resources. This is done iteratively until an equilibrium between demand and supply of resources is achieved and the optimal rates are allocated in the LTE mobile network. Figure \ref{fig:multiple_app_flow_centralized} shows a block diagram that represents the distributed RA algorithm.
\begin{figure}[t!]
\centering
  \includegraphics[width=0.8\plotwidth]{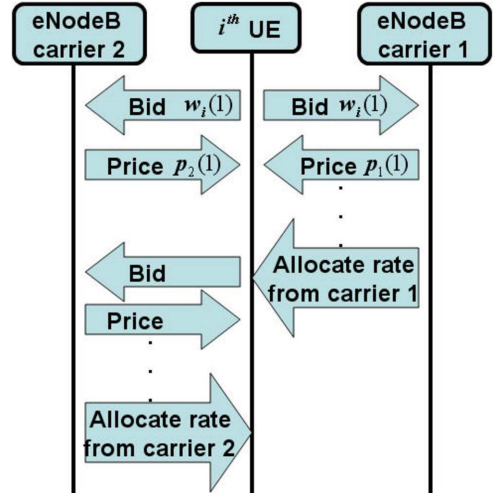}
  \caption{Flow Diagram with the assumption that the shadow price from the first carrier eNodeB $p_1$ is less before the $n_1$th iteration so rate $r_{1i}$ of the $i^{th}$ user is allocated. After the $n_1$th iteration, the shadow price from the second carrier eNodeB $p_2$ is less so rate $r_{2i}$ is allocated.}
  \label{fig:multiple_app_flow_centralized}
\end{figure}
\section{Convergence Analysis}\label{sec:conv_analy}
In this section, we present the convergence analysis of Algorithm 1 and 2 in \cite{Ahmed_Utility4} for different values of carriers eNodeBs rates $R_l$. This analysis is equivalent to low and high-traffic hours analysis in cellular systems (e.g. change in the number of active users $M$ and their traffic in the cellular system \cite{Ahmed_Utility2}).
\subsection{Drawback in Algorithm 1 and 2 in \cite{Ahmed_Utility4}}\label{sec:conv_drawbacks}
\begin{lem}\label{lem:slope_curve}
For sigmoidal-like utility function $U_i(r_{1i}+r_{2i}+ ...+r_{Ki})$,  the slope curvature function $\frac{\partial \log U_i(r_{1i}+r_{2i}+ ...+r_{Ki})}{\partial r_{li}}$ has an inflection point at $\sum_{l=1}^{K}r_{li} = r_i^{s} \approx b_i$ and is convex for $\sum_{l=1}^{K}r_{li} > r_i^{s}$.
\end{lem}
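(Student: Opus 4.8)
The plan is to reduce the statement to the sign behaviour of the third derivative of $\log U_i$ with respect to the aggregate rate, and then to read that sign off from a compact factored expression. Write $S:=\sum_{l=1}^{K} r_{li}$ and note that $U_i$ depends on $\textbf{r}_i$ only through $S$; hence for every carrier index $l$ and every order $k$ one has $\frac{\partial^{k}}{\partial r_{li}^{k}}\log U_i=\frac{d^{k}}{dS^{k}}\log U_i(S)$. Consequently the ``slope curvature function'' $g(S):=\frac{\partial}{\partial r_{li}}\log U_i$ is exactly $\frac{d}{dS}\log U_i(S)$, and the claim amounts to showing that $g''(S)=\frac{d^{3}}{dS^{3}}\log U_i(S)$ vanishes at a point $r_i^{s}\approx b_i$ and is strictly positive for $S>r_i^{s}$.

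Next I would compute $g''$ explicitly, extending the differentiation already carried out in the proof of Lemma \ref{lem:concavity}. Put $\sigma_i(S):=\big(1+e^{-a_i(S-b_i)}\big)^{-1}$, so that $U_i(S)=c_i\big(\sigma_i(S)-d_i\big)$ and $\log U_i(S)=\log c_i+\log\big(\sigma_i(S)-d_i\big)$, and use $\sigma_i'=a_i\sigma_i(1-\sigma_i)$. Differentiating $\log(\sigma_i-d_i)$ three times and collecting terms yields an expression of the form
\begin{equation*}
\frac{d^{3}}{dS^{3}}\log U_i(S)=\frac{a_i^{3}\,\sigma_i(1-\sigma_i)\,Q_i(S)}{\big(\sigma_i(S)-d_i\big)^{3}},
\end{equation*}
where $Q_i$ is a quartic polynomial in $\sigma_i$, namely
\begin{equation*}
Q_i=\big(1-6\sigma_i(1-\sigma_i)\big)(\sigma_i-d_i)^{2}-3\sigma_i(1-\sigma_i)(1-2\sigma_i)(\sigma_i-d_i)+2\sigma_i^{2}(1-\sigma_i)^{2}.
\end{equation*}
On the domain where $U_i>0$ we have $\sigma_i\in(d_i,1)$, so the prefactor $a_i^{3}\sigma_i(1-\sigma_i)/(\sigma_i-d_i)^{3}$ is strictly positive; therefore the sign of $g''(S)$ equals the sign of $Q_i(S)$, and the inflection points of $g$ are precisely the sign changes of $Q_i$.

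It then remains to locate the relevant root of $Q_i$. In the idealized un-normalized case $d_i=0$ the polynomial collapses to $Q_i=-\sigma_i^{3}(1-2\sigma_i)$, which is negative for $\sigma_i<\tfrac12$, zero at $\sigma_i=\tfrac12$, and positive for $\sigma_i>\tfrac12$; since $\sigma_i(S)=\tfrac12$ exactly at $S=b_i$, this gives a single inflection point at $r_i^{s}=b_i$ with $g$ convex for all $S>b_i$. For the normalized utility, $d_i=\tfrac{1}{1+e^{a_ib_i}}$ is exponentially small in $a_ib_i$, so $Q_i$ is a small perturbation of $-\sigma_i^{3}(1-2\sigma_i)$; its root near $\sigma_i=\tfrac12$ (which is a simple zero, $Q_i'(\tfrac12)=\tfrac14\neq0$ at $d_i=0$) moves to some $\sigma_i^{\ast}=\tfrac12+O(d_i)$, i.e. to $r_i^{s}=b_i+\tfrac{1}{a_i}\log\!\frac{\sigma_i^{\ast}}{1-\sigma_i^{\ast}}\approx b_i$, and $Q_i(S)>0$ for every $S>r_i^{s}$, because $Q_i$ evaluated at $\sigma_i=1$ equals $(1-d_i)^{2}>0$ while $\sigma_i^{\ast}$ is the largest root of $Q_i$ in $(d_i,1)$. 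This yields the inflection point at $r_i^{s}\approx b_i$ and the convexity of the slope curvature function on $(r_i^{s},\infty)$.

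The main obstacle is this last step: pinning down the root of the quartic $Q_i$ once the normalization term $d_i$ is present, and in particular verifying that $\sigma_i^{\ast}$ is its largest root in $(d_i,1)$ so that $Q_i$ stays positive to the right of it. The cleanest route is the perturbation argument above, combined with the exact $d_i=0$ computation and the limit $Q_i\to(1-d_i)^{2}$ as $S\to\infty$ — which is precisely why the statement is phrased with $r_i^{s}\approx b_i$ rather than an equality. If an unconditional argument is preferred, one can instead check directly that $Q_i$, a quartic in $\sigma_i$ with positive leading coefficient and $Q_i(1)=(1-d_i)^{2}>0$, has no zero on $(\sigma_i^{\ast},1)$ and is therefore positive there, which already gives the convexity on the tail of the operating range.
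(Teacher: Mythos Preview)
Your argument is correct and takes a genuinely different route from the paper. Both proofs reduce to the sign of the third derivative of $\log U_i$ in the aggregate rate, but the paper stays in the original exponential variable: it writes $\frac{\partial^{2} S_i}{\partial r_{li}^{2}}$ explicitly as a sum of two terms $S_i^{1}+S_i^{2}$, observes that $S_i^{2}$ changes sign exactly at $\sum_l r_{li}=b_i$, and argues via the limits $S_i^{1}\to\infty$ as $\sum_l r_{li}\to 0$ and $S_i^{1}\to 0$ as $\sum_l r_{li}\to b_i$ (the latter under $b_i\gg 1/a_i$) to conclude that the inflection sits at $r_i^{s}\approx b_i$ with convexity thereafter. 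You instead pass to the logistic variable $\sigma_i=(1+e^{-a_i(S-b_i)})^{-1}$, factor the third derivative as a strictly positive prefactor times a quartic $Q_i(\sigma_i)$, solve the un-normalized case $d_i=0$ exactly (root at $\sigma_i=\tfrac12$, i.e.\ $r_i^{s}=b_i$), and then perturb. Your approach makes the approximation $r_i^{s}\approx b_i$ quantitative---the root moves by $O(d_i)$, exponentially small in $a_ib_i$---and isolates exactly what must be checked (that $\sigma_i^{\ast}$ is the largest root of $Q_i$ in $(d_i,1)$), whereas the paper's additive decomposition and limit analysis is quicker to write but leans on the same asymptotic assumption $b_i\gg 1/a_i$ in a less explicit way.
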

\begin{proof}
For the sigmoidal-like function $U_i(r_{1i}+r_{2i}+ ...+r_{Ki}) = c_i\Big(\frac{1}{1+e^{-a_i(\sum_{l=1}^{K}r_{li}-b_i)}}-d_i\Big)$, let $S_i(r_{li}) = \frac{\partial \log U_i(r_{1i}+r_{2i}+ ...+r_{Ki})}{\partial r_{li}}$ be the slope curvature function. Then, we have that
\begin{equation}
\begin{aligned}\label{eqn:diff_slope}
\frac{\partial S_i}{\partial r_{li}} &= \frac{-a_i^2d_ie^{-a_i(\sum_{l=1}^{K}r_{li}-b_i)}}{c_i\Big(1-d_i(1+e^{-a_i(\sum_{l=1}^{K}r_{li}-b_i)})\Big)^2} \\
&- \frac{a_i^2e^{-a_i(\sum_{l=1}^{K}r_{li}-b_i)}}{\Big(1+e^{-a_i(\sum_{l=1}^{K}r_{li}-b_i)}\Big)^2}\\
\text{and}\\
\frac{\partial^2 S_i}{\partial r_{li}^2}& = \frac{a_i^3d_ie^{-a_i(\sum_{l=1}^{K}r_{li}-b_i)}(1-d_i(1-e^{-a_i(\sum_{l=1}^{K}r_{li}-b_i)}))}{c_i\Big(1-d_i(1+e^{-a_i(\sum_{l=1}^{K}r_{li}-b_i)})\Big)^3} \\
+& \frac{a_i^3e^{-a_i(\sum_{l=1}^{K}r_{li}-b_i)}(1-e^{-a_i(\sum_{l=1}^{K}r_{li}-b_i)})}{\Big(1+e^{-a_i(\sum_{l=1}^{K}r_{li}-b_i)}\Big)^3}.\\
\end{aligned}
\end{equation}
We analyze the curvature of the slope of the natural logarithm of sigmoidal-like utility function. For the first derivative, we have $\frac{\partial S_i}{\partial r_{li}}<0 \:\:\:\forall\: r_{li}$. The first term $S^1_i$ of $\frac{\partial^2 S_i}{\partial r_{li}^2}$ in equation (\ref{eqn:diff_slope}) can be written as
\begin{equation}\label{eqn:slope_fn}
S^1_i = \frac{a_i^3e^{a_ib_i}(e^{a_ib_i}+e^{-a_i(\sum_{l=1}^{K}r_{li}-b_i)})}{(e^{a_ib_i}-e^{-a_i(\sum_{l=1}^{K}r_{li}-b_i)})^3}
\end{equation}
and we have the following properties:
\begin{equation}\label{eqn:slope_fn_term1}
\left\{
\begin{array}{l l}
   \lim_{\sum_{l=1}^{K}r_{li} \rightarrow 0} S^1_i = \infty,\\
   \lim_{\sum_{l=1}^{K}r_{li} \rightarrow b_i} S^1_i = 0 \:\:\text{for} \:\:b_i 	\gg \frac{1}{a_i}.
\end{array} \right.
\end{equation}
For second term $S^2_i$ of $\frac{\partial^2 S_i}{\partial r_i^2}$ in equation (\ref{eqn:diff_slope}), we have the following properties:
\begin{equation}\label{eqn:slope_fn_term2}
\left\{
\begin{array}{l l}
   S^2_i (r_{li}=b_{i}-\sum_{j\neq l} r_{ji}) = 0,\\
   S^2_i (r_{li}>b_{i}-\sum_{j\neq l} r_{ji}) > 0,\\
   S^2_i (r_{li}<b_{i}-\sum_{j\neq l} r_{ji}) < 0.
\end{array} \right.
\end{equation}
From equation (\ref{eqn:slope_fn_term1}) and (\ref{eqn:slope_fn_term2}), $S_i$ has an inflection point at $\sum_{l=1}^{K}r_{li} = r_i^{s}\approx b_i$. In addition, we have the curvature of $S_i$ changes from a convex function close to origin to a concave function before the inflection point $\sum_{l=1}^{K}r_{li} = r_i^{s}$ then to a convex function after the inflection point.
\end{proof}
Our rate allocation approach guarantees non-zero rate allocation for all active users in the coverage area of a specific carrier eNodeB. We define the set $\mathcal{M}^{l}:=\{i:r_{li} \neq 0\}$ to be the set of active users covered by the $l^{th}$ eNodeB. Then, we have the following Corollary.
\begin{cor}\label{cor:sig_convergence}
If $\sum_{i \in \mathcal{M}^{l}}r_i^{\text{inf}} \ll R_l \: \forall \: l\in L$ then Algorithm 1 and  2 in \cite{Ahmed_Utility4} converge to the global optimal rates which correspond to the steady state shadow price $p_{ss}< \frac{a_{i_{\max}} d_{i_{\max}} }{1-d_{i_{\max}} }+\frac{a_{i_{\max} }}{2}$ where $i_{\max} = \arg \max_{i \in \mathcal{M}^{l}} b_i$.
\end{cor}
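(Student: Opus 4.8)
The plan is to combine the convexity of the problem (Theorem~\ref{thm:global_soln}) with the curvature information of Lemma~\ref{lem:slope_curve}: the hypothesis $\sum_{i\in\mathcal{M}^{l}}r_i^{\text{inf}}\ll R_l$ should force the equilibrium shadow prices to be so small that every active user is served at a rate beyond its inflection point $r_i^{s}\approx b_i$, i.e.\ in the region where the slope curve $S_i(r_{li})=\frac{\partial}{\partial r_{li}}\log U_i(r_{1i}+\dots+r_{Ki})$ is convex and decreasing; on that region the bid iteration of \cite{Ahmed_Utility4} is a contraction, and the bound on $p_{ss}$ falls out of evaluating $S_i$ at the inflection point.

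First I would characterize the equilibrium. By the KKT conditions for the UE subproblem~(\ref{eqn:opt_prob_fairness_UE}), at the optimum every active user $i$ served by a minimum-price carrier satisfies $S_i\big(\sum_{l}r_{li}^{*}\big)=p_{ss}$, the common minimum shadow price; since $S_i$ is strictly decreasing ($\partial S_i/\partial r_{li}<0$, shown in the proof of Lemma~\ref{lem:slope_curve}) it is invertible, so $\sum_l r_{li}^{*}=S_i^{-1}(p_{ss})=:g_i(p_{ss})$ with $g_i$ strictly decreasing. Aggregating the eNodeB updates $p_l=\frac{1}{R_l}\sum_i w_{li}=\frac{p_l}{R_l}\sum_i r_{li}$ over all carriers, with the slack variables set to $z_l=0$ as in Section~\ref{sec:Dual}, gives the balance $\sum_{i}g_i(p_{ss})=\sum_{l\in L}R_l$ (summed over all active users), which pins down $p_{ss}$.

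Second, the bound. Since $g_i$ is strictly decreasing and $g_i\big(S_i(r_i^{s})\big)=r_i^{s}\approx r_i^{\text{inf}}$, the balance $\sum_i g_i(p_{ss})=\sum_{l}R_l\gg\sum_i r_i^{\text{inf}}$ cannot hold unless $g_i(p_{ss})>r_i^{s}$, i.e.\ unless $p_{ss}<S_i(r_i^{s})$; in particular every active user ends up past its inflection point. Substituting $\sum_l r_{li}=r_i^{s}\approx b_i$ (so $e^{-a_i(\sum_l r_{li}-b_i)}\approx 1$, legitimate because $b_i\gg\frac{1}{a_i}$) into the first-derivative expression for $\log U_i$ recorded in the proof of Lemma~\ref{lem:concavity} gives $S_i(r_i^{s})\approx\frac{a_i d_i}{1-d_i}+\frac{a_i}{2}$, which for comparable $a_i$ decreases in $b_i$; taking the tightest of the inequalities $p_{ss}<S_i(r_i^{s})$, namely the one for $i_{\max}=\arg\max_{i\in\mathcal{M}^{l}}b_i$, yields $p_{ss}<\frac{a_{i_{\max}}d_{i_{\max}}}{1-d_{i_{\max}}}+\frac{a_{i_{\max}}}{2}$.

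Third, convergence. By the previous step the iterates eventually live in the region $\sum_l r_{li}>r_i^{s}$ where, by Lemma~\ref{lem:slope_curve}, $S_i$ is convex decreasing, hence $g_i=S_i^{-1}$ is convex decreasing with bounded slope, so the per-carrier price map $p\mapsto\frac{p}{R_l}\sum_i g_i(p)$ is a contraction in a neighbourhood of $p_{ss}$ --- exactly the property that fails on the concave stretch of $S_i$ near the origin, which is the source of the fluctuations in \cite{Ahmed_Utility4} discussed in Section~\ref{sec:conv_drawbacks}. Convergence of Algorithms~1 and~2 then follows, and the limit is the unique global optimal rate vector of~(\ref{eqn:opt_prob_fairness}) by Theorem~\ref{thm:global_soln}. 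I expect this last step to be the main obstacle: with joint CA each UE splits its demand across several carriers ordered by shadow price, so one must first show the price ordering (hence each UE's active-carrier set) stabilizes, and then bound by one the spectral radius of the Jacobian of the coupled update over $(p_1,\dots,p_K)$; it is precisely the abundant-resource hypothesis that keeps every $p_l(n)$ small enough for this to hold, and when resources are scarce the equilibrium sits on the concave part of $S_i$ and both the contraction argument and the corollary fail.
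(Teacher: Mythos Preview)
Your approach is essentially the same as the paper's: both identify the optimality condition $S_i(\sum_l r_{li})=p_l$, invoke Lemma~\ref{lem:slope_curve} to argue that the abundant-resource hypothesis forces the equilibrium into the convex region $\sum_l r_{li}>r_i^{s}\approx b_i$, appeal to Kelly/Low-style convergence in that region, and read off the bound on $p_{ss}$ by evaluating $S_i$ at the inflection point with $e^{-a_i(\sum_l r_{li}-b_i)}\approx 1$. The paper is terser---it simply cites \cite{kelly98ratecontrol,Low99optimizationflow} for convergence in the convex region rather than sketching a contraction argument, and it does not write out the balance equation $\sum_i g_i(p_{ss})=\sum_l R_l$---so your elaboration of the contraction mechanism and your honest flagging of the multi-carrier price-ordering stabilization issue go beyond what the paper actually proves, but the skeleton is identical.
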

\begin{proof}
For the sigmoidal-like function $U_i(r_{1i}+r_{2i}+ ...+r_{Ki}) = c_i\Big(\frac{1}{1+e^{-a_i(\sum_{l=1}^{K}r_{li}-b_i)}}-d_i\Big)$, the optimal solution is achieved by solving the optimization problem (\ref{eqn:opt_prob_fairness_mod}). In Algorithm 1 \cite{Ahmed_Utility4}, an important step to reach to the optimal solution is to solve the optimization problem $r_{li}(n) = \arg \underset{r_{li}}\max \Big(\log U_i(r_{1i}+r_{2i}+ ...+r_{Ki}) - p_l(n)r_{li}\Big)$ for every UE in the $l^{th}$ eNodeB coverage area. The solution of this problem can be written, using Lagrange multipliers method, in the form
\begin{equation}\label{eqn:slope_equation}
\frac{\partial \log U_i(r_{1i}+r_{2i}+ ...+r_{Ki})}{\partial r_{li}}-p_l =  S_i(r_{li}) - p_l = 0.
\end{equation}
From equation (\ref{eqn:slope_fn_term1}) and (\ref{eqn:slope_fn_term2}) in Lemma \ref{lem:slope_curve}, we have the curvature of $S_i(r_{li})$ is convex for $\sum_{l=1}^{K}r_{li}  > r_i^s \approx b_i$. The algorithm in \cite{Ahmed_Utility4} is guaranteed to converge to the global optimal solution when the slope $S_i(r_{li})$ of all the utility functions natural logarithm $\log U_i(r_{1i}+r_{2i}+ ...+r_{Ki})$ are in the convex region of the functions, similar to analysis of logarithmic functions in \cite{kelly98ratecontrol} and \cite{Low99optimizationflow}. Therefore, the natural logarithm of sigmoidal-like functions $\log U_i(r_{1i}+r_{2i}+ ...+r_{Ki})$ converge to the global optimal solution for $\sum_{l=1}^{K}r_{li} > r_i^s \approx b_i$. The inflection point of sigmoidal-like function $U_i(r_{1i}+r_{2i}+ ...+r_{Ki})$ is at $r_i^{\text{inf}} = b_i$. For $\sum_{i \in \mathcal{M}^{l}}r_i^{\text{inf}} \ll R_l$, the algorithm in \cite{Ahmed_Utility4} allocates rates $\sum_{l=1}^{K}r_{li}>b_i$ for all users.
Since $S_i(r_{li})$ is convex for $\sum_{l=1}^{K}r_{li}>r_i^s \approx b_i$ then the optimal solution can be achieved by Algorithm 1 and 2 in \cite{Ahmed_Utility4}. We have from equation (\ref{eqn:slope_equation}) and as $S_i(r_{li})$ is convex for $\sum_{l=1}^{K}r_{li} > r_i^s \approx b_i$, that $p_{ss}< S_i(\sum_{l=1}^{K}r_{li} =\max_{i \in \mathcal{M}^{l}} b_i)$ where $S_i(\sum_{l=1}^{K}r_{li} =\max_{i \in \mathcal{M}^{l}} b_i) = \frac{a_{i_{\max}} d_{i_{\max}} }{1-d_{i_{\max}} }+\frac{a_{i_{\max} }}{2}$ and $i_{\max} = \arg \max_{i \in \mathcal{M}^{l}} b_i$.
\end{proof}
We define the set $\mathcal{M}^{\mathcal{L}}:=\{i:r_{li} \neq 0 \: \forall \: l\in \mathcal{L}, r_{li} = 0 \: \forall \: l\notin \mathcal{L} \}$ to be the set of active users covered exclusively by the set of carriers eNodeBs $\mathcal{L} \subseteq L$. Then, we have the following Corollary.
\begin{cor}\label{cor:sig_fluctuate}
For $\sum_{i \in \mathcal{M}^{\mathcal{L}}}r_i^{\text{inf}}> \sum_{l \in \mathcal{L}}R_l$ and the global optimal shadow price $p_{ss} \approx \frac{a_id_i e^{\frac{a_ib_i}{2}}}{1-d_i(1+e^{\frac{a_ib_i}{2}})} + \frac{a_ie^{\frac{a_ib_i}{2}}}{(1+e^{\frac{a_ib_i}{2}})}$ where $i \in \mathcal{M}^{\mathcal{L}}$, then the solution given by Algorithm 1 and 2 in \cite{Ahmed_Utility4} fluctuates about the global optimal rates.
\end{cor}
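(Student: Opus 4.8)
The statement is a \emph{negative} result, so the plan is to show that, under the scarcity hypothesis, the fixed–point iteration carried out by Algorithm~1 and~2 in \cite{Ahmed_Utility4} has its fixed point at the global optimal rates but is \emph{not} locally contractive there — in fact it is orientation–reversing and expanding — so the iterates cannot settle at the optimum and instead stay in a bounded band oscillating around it. First I would reduce the problem over the carrier set $\mathcal L$ to a single effective carrier: for a user $i\in\mathcal M^{\mathcal L}$ the joint‑CA step (bid on the cheapest carrier, then the next cheapest, \ldots) forces, at any equilibrium, a common price on every carrier of $\mathcal L$ that carries traffic, since if some $p_l$, $l\in\mathcal L$, were strictly largest then no $\mathcal M^{\mathcal L}$ user would bid on $l$, giving $\sum_i r_{li}=0<R_l$ and hence $p_l=\sum_i w_{li}/R_l=0$, a contradiction. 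Thus $\mathcal L$ acts as one carrier with capacity $R_{\mathcal L}:=\sum_{l\in\mathcal L}R_l$ and price $p$, each $i\in\mathcal M^{\mathcal L}$ demanding the aggregate rate $\rho_i(p)$ solving the per‑UE condition $S_i(\rho_i)=p$ with $S_i$ the slope curve of Lemma~\ref{lem:slope_curve}, and the eNodeB update of \cite{Ahmed_Utility4} becomes the scalar map $p\mapsto T(p):=\frac{p}{R_{\mathcal L}}\sum_{i\in\mathcal M^{\mathcal L}}\rho_i(p)$, whose fixed point $p_{ss}$ is characterised by the supply–equals–demand identity $\sum_i\rho_i(p_{ss})=R_{\mathcal L}$.

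Next I would locate $p_{ss}$ on the slope curve. Because $r_i^{\text{inf}}=b_i$ and $\sum_{i\in\mathcal M^{\mathcal L}}r_i^{\text{inf}}>R_{\mathcal L}$, the capacity constraint forces $\sum_l r_{li}<b_i$ for the $\mathcal M^{\mathcal L}$‑users, i.e. every such user operates \emph{below} its inflection point; and the assumed value $p_{ss}\approx\frac{a_id_ie^{a_ib_i/2}}{1-d_i(1+e^{a_ib_i/2})}+\frac{a_ie^{a_ib_i/2}}{1+e^{a_ib_i/2}}$ is exactly $S_i$ evaluated at $\sum_l r_{li}=b_i/2$, so the representative equilibrium rate is $\rho_i(p_{ss})\approx b_i/2$. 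By Lemma~\ref{lem:slope_curve}, $b_i/2<r_i^{s}\approx b_i$ falls in the \emph{concave} part of $S_i$ — the band $(0,r_i^s)$ where $S_i$ has descended from its near‑origin convex blow‑up toward the inflection — which is precisely the region \emph{excluded} by the convergence hypothesis of Corollary~\ref{cor:sig_convergence}.

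The heart of the argument, and the step I expect to be the main obstacle, is to show $|T'(p_{ss})|>1$, indeed $T'(p_{ss})<-1$. Differentiating and using $\sum_i\rho_i(p_{ss})=R_{\mathcal L}$ gives $T'(p_{ss})=1+\frac{p_{ss}}{R_{\mathcal L}}\sum_i\rho_i'(p_{ss})$ with $\rho_i'(p)=1/S_i'(\rho_i(p))<0$, so it suffices to prove $\frac{p_{ss}}{R_{\mathcal L}}\sum_i\frac{1}{|S_i'(\rho_i(p_{ss}))|}>2$. This needs quantitative control of $S_i$ inside its concave band: with $d_i=\frac{1}{1+e^{a_ib_i}}$, $c_i\approx1$ and the regime $a_ib_i\gg1$ assumed throughout Lemma~\ref{lem:slope_curve}, one finds $S_i(b_i/2)\approx a_i$ while $|S_i'(b_i/2)|$ is exponentially small in $a_ib_i$ (both terms of $\partial S_i/\partial r_{li}$ are $O\big(a_i^2e^{-a_ib_i/2}\big)$ there), so $1/|S_i'(\rho_i(p_{ss}))|\approx e^{a_ib_i/2}/a_i^2$ dominates $R_{\mathcal L}/p_{ss}\approx\big(\sum_i b_i/2\big)/a_i$, forcing $|T'(p_{ss})|\gg1$. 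The delicate point is turning these ``$\approx$'' into a clean inequality rather than an asymptotic: I would obtain a rigorous upper bound on $|S_i'(b_i/2)|$ from the concavity of $S_i$ on $(r_i^s/2,r_i^s)$ together with $S_i(0^+)=\infty$ via a secant/tangent comparison, or, failing a fully general bound, carry out the estimate for the single dominant‑user reduction that the statement's formula (written for one index $i$) already presupposes.

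Finally I would upgrade $T'(p_{ss})<-1$ to the stated bounded fluctuation. Since each $\rho_i(p)$ is monotone decreasing and bounded in $[0,R_{\mathcal L}]$, $T$ maps a bounded interval about $p_{ss}$ into itself, so no trajectory escapes; but an orientation‑reversing, expanding fixed point means $T$ overshoots across $p_{ss}$ — operationally, a price slightly above $p_{ss}$ collapses the demanded rates through the flat‑then‑steep inverse $S_i^{-1}$, so $\sum_i\rho_i<R_{\mathcal L}$ and the next price falls below $p_{ss}$, which overloads the carrier and drives the price back up — and the two‑step map $T\circ T$ has a repelling fixed point at $p_{ss}$. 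Hence the bids $w_{li}(n)$, and with them the rates $r_{li}(n)$, never satisfy the stopping test $|w_{li}(n)-w_{li}(n-1)|<\delta$ and keep fluctuating about the global optimal rates, as claimed. I would close by contrasting with Corollary~\ref{cor:sig_convergence}: there $\sum_i r_i^{\text{inf}}\ll R_{\mathcal L}$ places every $\rho_i(p_{ss})$ in the convex tail of $S_i$ where $|S_i'|$ is bounded below, so $|T'(p_{ss})|<1$ and the same iteration converges — isolating the non‑concave ``belly'' of $S_i$ as the sole cause of the fluctuations.
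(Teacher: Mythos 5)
Your proposal is correct in its core mechanism and reaches the same conclusion, but it is a genuinely more formal route than the paper's. The paper's proof is two qualitative sentences: since $\sum_{i\in\mathcal M^{\mathcal L}}r_i^{\text{inf}}>\sum_{l\in\mathcal L}R_l$ forces some user's optimal aggregate rate below $b_i$, and $p_{ss}$ sits at $S_i(b_i/2)$ where the slope curve is in its non-convex band, a small perturbation of $p_l(n)$ makes the root of $S_i(r_{li})-p_l(n)=0$ jump between the concave and convex curvature regions of $S_i$; this produces fluctuating bids, hence fluctuating prices, hence oscillation about the optimum. You formalize exactly that feedback loop as a linearized stability analysis of the scalar price iteration $p\mapsto T(p)=\tfrac{p}{R_{\mathcal L}}\sum_i\rho_i(p)$, correctly identify $p_{ss}=S_i(b_i/2)$ from the stated formula, and turn the paper's ``small change in price causes large change in rate'' into the quantitative criterion $T'(p_{ss})=1+\tfrac{p_{ss}}{R_{\mathcal L}}\sum_i\rho_i'(p_{ss})<-1$, driven by $|S_i'(b_i/2)|=O(a_i^2e^{-a_ib_i/2})$ being exponentially small while $S_i(b_i/2)\approx a_i$. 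What your approach buys is an actual proof of non-convergence (repelling fixed point inside a bounded invariant interval) rather than a plausibility argument; what it costs is the extra machinery whose loose ends you honestly flag: the collapse of the sequential cheapest-carrier-first, multi-dimensional bid dynamics onto a single effective carrier is only heuristic when users outside $\mathcal M^{\mathcal L}$ share carriers in $\mathcal L$, and the ``$\approx$''-to-inequality step for $|T'(p_{ss})|>1$ is left as an asymptotic in $a_ib_i\gg1$. Since the paper's own proof asserts rather than proves the instability, your version is strictly more informative; just be aware the published argument does not require (or supply) any of the fixed-point machinery, and that the corollary's formula being written for a single index $i$ is, as you note, an implicit homogeneity assumption in the original as well.
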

\begin{proof}
For the sigmoidal-like function $U_i(r_{1i}+r_{2i}+ ...+r_{Ki}) = c_i\Big(\frac{1}{1+e^{-a_i(\sum_{l=1}^{K}r_{li}-b_i)}}-d_i\Big)$, it follows from lemma \ref{lem:slope_curve} that for  $\sum_{i \in \mathcal{M}^{\mathcal{L}}}r_i^{\text{inf}}> \sum_{l \in \mathcal{L}}R_l$  $\exists \:\: {i \in \mathcal{M}^{\mathcal{L}}}$ such that the optimal rates $\sum_{l=1}^{K}r_{li}^{\text{opt}} < b_i$. Therefore, if $p_{ss} \approx \frac{a_id_i e^{\frac{a_ib_i}{2}}}{1-d_i(1+e^{\frac{a_ib_i}{2}})} + \frac{a_ie^{\frac{a_ib_i}{2}}}{(1+e^{\frac{a_ib_i}{2}})}$ is the optimal shadow price for optimization problem (\ref{eqn:opt_prob_fairness_mod}). Then, a small change in the shadow price $p_l(n)$ in the $n^{th}$ iteration can lead the rate $r_{li}(n)$ (root of $S_i(r_{li}) - p_l(n) =0$) to fluctuate between the concave and convex curvature of the slope curve $S_i(r_{li})$ for the $i^{th}$ user. Therefore, it causes fluctuation in the bid $w_{li}(n)$ sent to the eNodeB and fluctuation in the shadow price $p_l(n)$ set by eNodeB.
Therefore, the iterative solution of Algorithm 1 and 2 in \cite{Ahmed_Utility4} fluctuates about the global optimal rates $\sum_{l=1}^{K}r_{li}^{\text{opt}}$.
\end{proof}
\begin{thm}\label{thm:sig_not_conv}
Algorithm 1 and 2 in \cite{Ahmed_Utility4} does not converge to the global optimal rates for all values of $R_l$.
\end{thm}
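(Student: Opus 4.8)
The statement is an existence claim in disguise: to prove that Algorithm~1 and~2 in \cite{Ahmed_Utility4} do \emph{not} converge \emph{for all} values of $R_l$, it suffices to exhibit one profile of carrier capacities $\{R_l\}_{l\in L}$ for which the iterates fail to settle at the global optimum. The plan is therefore to place the system squarely in the scarce-resource regime that was already isolated in Corollary~\ref{cor:sig_fluctuate}, and then to read off non-convergence from the fluctuation behaviour established there; Corollary~\ref{cor:sig_convergence} is invoked only to contrast this with the abundant-resource regime, in which convergence does hold, thereby underlining that the failure is genuinely $R_l$-dependent.

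First I would fix a population of UEs at least one of which, say user $i$, runs a real-time application modelled by the normalized sigmoidal-like utility (\ref{eqn:sigmoid}), so that its inflection point is $r_i^{\text{inf}} = b_i$. I would then choose the carrier rates small enough that
\begin{equation*}
\sum_{i \in \mathcal{M}^{\mathcal{L}}} r_i^{\text{inf}} \;>\; \sum_{l \in \mathcal{L}} R_l
\end{equation*}
holds for some nonempty $\mathcal{L} \subseteq L$ --- for concreteness one may take $\mathcal{L} = L$ with every UE reachable from every carrier, so that $\mathcal{M}^{\mathcal{L}}$ is the entire set of active users. Such a choice is always available once the aggregate ``demand at the inflection points'' exceeds the aggregate supply, which is precisely the peak-traffic situation the paper is about; no delicate tuning of $a_i, b_i$ is needed, only that $\sum_i b_i$ over the real-time users be made larger than $\sum_l R_l$.

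Next I would invoke Corollary~\ref{cor:sig_fluctuate} with this capacity profile: under the displayed condition there exists a user $i \in \mathcal{M}^{\mathcal{L}}$ whose global-optimal aggregate rate satisfies $\sum_{l=1}^{K} r_{li}^{\text{opt}} < b_i$, so by Lemma~\ref{lem:slope_curve} the operating point of the per-UE subproblem $r_{li}(n) = \arg\max_{r_{li}}(\log U_i(r_{1i}+r_{2i}+ ...+r_{Ki}) - p_l(n) r_{li})$ lies in the region $\sum_l r_{li} < r_i^s \approx b_i$ where the slope curve $S_i$ is \emph{concave} rather than convex. Consequently the root of $S_i(r_{li}) - p_l(n) = 0$ moves discontinuously as $p_l(n)$ crosses the optimal shadow price: an arbitrarily small update of $p_l(n)$ pushes $r_{li}(n)$ across the concave/convex boundary of $S_i$, which perturbs the bid $w_{li}(n) = p_l(n) r_{li}(n)$ and hence the updated shadow price $\frac{\sum_{i=1}^{M} w_{li}}{R_l}$ by a non-vanishing amount. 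Thus $|w_{li}(n) - w_{li}(n-1)|$ cannot eventually fall below the threshold $\delta$, the stopping rule of Algorithm~2 in \cite{Ahmed_Utility4} never triggers, and the iterates oscillate about $r_{li}^{\text{opt}}$ instead of converging to it. Since this holds for the chosen (scarce) $\{R_l\}$, Algorithm~1 and~2 do not converge to the global optimal rates for all values of $R_l$, as claimed.

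The step I expect to be the main obstacle is making the last paragraph fully rigorous --- namely, showing that the oscillation is genuinely \emph{sustained}, i.e.\ that the price-update map $p_l(n) \mapsto p_l(n+1)$ induced by the non-convex per-UE subproblems has no attracting fixed point in a neighbourhood of $p_{ss}$, rather than merely arguing from the local discontinuity of the root. This needs a quantitative lower bound on the jump in $r_{li}(n)$ across the inflection point in terms of $a_i, b_i$, compared against the $1/R_l$ factor in the price update, so that one can certify $|w_{li}(n) - w_{li}(n-1)| \not\to 0$ along the trajectory. Establishing this gap is where the real work lies, and it is exactly what motivates the robust modification introduced later in Section~\ref{sec:conv_analy}; by comparison, the reduction to Corollary~\ref{cor:sig_fluctuate} and the construction of a scarce capacity profile are routine.
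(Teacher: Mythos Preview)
Your proposal is correct and follows essentially the same approach as the paper: the paper's proof is a single sentence that simply invokes Corollary~\ref{cor:sig_convergence} and Corollary~\ref{cor:sig_fluctuate} together to conclude non-convergence for some values of $R_l$. Your elaboration --- explicitly constructing a scarce capacity profile and unpacking the fluctuation mechanism from Corollary~\ref{cor:sig_fluctuate} --- is considerably more detailed than what the paper provides, and your honest flagging of the gap (that sustained oscillation, as opposed to local sensitivity, is not fully quantified) in fact applies equally to the paper's own Corollary~\ref{cor:sig_fluctuate}, which the paper accepts at the same informal level.
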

\begin{proof}
It follows from Corollary \ref{cor:sig_convergence} and \ref{cor:sig_fluctuate} that Algorithm 1 and 2 in \cite{Ahmed_Utility4} does not converge to the global optimal rates for all values of $R_l$.
\end{proof}

\begin{algorithm}[tb]
\caption{The $i^{th}$ UE Algorithm}\label{alg:UE_FK}
\begin{algorithmic}
\STATE {Send initial bid $w_{li}(1)$ to $l^{th}$ carrier eNodeB (where $l \in L = \{1, 2, ..., K\}$)}
\LOOP
	\STATE {Receive shadow prices $p_{l\in L}(n)$ from all in range carriers eNodeBs}
	\IF {STOP from all in range carriers eNodeBs} %
		\STATE {Calculate allocated rates $r_{li} ^{\text{opt}}=\frac{w_{li}(n)}{p_l(n)}$}
		\STATE {STOP}
	\ELSE
		\STATE{Set $p_{\min}^{0} = \{\}$ and $r_{i}^{0}=0$}
		\FOR{$m = 1 \to K$}
		    \STATE{$p_{\min}^{m}(n) = \min (\textbf{p} \setminus \{p_{\min}^{0},p_{\min}^{1},...,p_{\min}^{m-1}\})$}
		    \STATE{$l_m = \{l \in L : p_{l}=\min (\textbf{p} \setminus \{p_{\min}^{0}, p_{\min}^{1}, ..., p_{\min}^{m-1}\}) \}$}
		    \COMMENT{$l_m$ is the index of the corresponding carrier}
		    \STATE {Solve $r_{l_mi}(n) = \arg \underset{r_{l_mi}}\max \Big(\log U_i(r_{1i}+ ... + r_{Ki}) - \sum_{l=1}^{K}p_l(n)r_{li}\Big)$ for the $l_m$ carrier eNodeB}
		    \STATE {$r_{i}^{m}(n) = r_{l_mi}(n)-\sum_{j=0}^{m-1}r_{i}^{j}(n)$}
		    \IF{$r_{i}^{m}(n)<0$}
			      \STATE{Set $r_{i}^{m}(n)=0$}
		    \ENDIF
		    \STATE {Calculate new bid $w_{l_mi} (n)= p_{\min}^{m}(n) r_{i}^{m}(n)$}
		    \IF {$|w_{l_mi}(n) -w_{l_mi}(n-1)| >\Delta w(n)$} %
			    \STATE {$w_{l_mi}(n) =w_i(n-1) + \text{sign}(w_{l_mi}(n) -w_{l_mi}(n-1))\Delta w(n)$}
			    \COMMENT {$\Delta w = h_1 e^{-\frac{n}{h_2}}$ or $\Delta w = \frac{h_3}{n}$}
		    \ENDIF
		    \STATE {Send new bid $w_{l_mi} (n)$ to $l_m$ carrier eNodeB}
		\ENDFOR
	\ENDIF
\ENDLOOP
\end{algorithmic}
\end{algorithm}
\subsection{Solution using Algorithm \ref{alg:UE_FK} and \ref{alg:eNodeB_FK}}\label{sec:conv_solution}
For a robust algorithm, we add a fluctuation decay function to the algorithm presented in \cite{Ahmed_Utility4} as shown in Algorithm \ref{alg:UE_FK}. Our robust algorithm ensures convergence for all values of the carriers eNodeBs maximum rate $R_l$ for all $l$. Algorithm \ref{alg:UE_FK} and \ref{alg:eNodeB_FK} allocated rates coincide with Algorithm 1 and 2 in \cite{Ahmed_Utility4} for $\sum_{i \in \mathcal{M}^{l}}r_i^{\text{inf}} \ll R_l \:\: \forall \:\: l \in L$. For $\sum_{i \in \mathcal{M}^{\mathcal{L}}}r_i^{\text{inf}}> \sum_{l \in \mathcal{L}}R_l$, robust algorithm avoids the fluctuation in the non-convergent region discussed in the previous section. This is achieved by adding a convergence measure $\Delta w(n)$ that senses the fluctuation in the bids $w_{li}$. In case of fluctuation, it decreases the step size between the current and the previous bid $w_{li}(n) -w_{li}(n-1)$ for every user $i$ using \textit{fluctuation decay function}. The
fluctuation decay function could be in the following forms:
\begin{itemize}
\item \textit{Exponential function}: It takes the form $\Delta w(n) = h_1 e^{-\frac{n}{h_2}}$.
\item \textit{Rational function}: It takes the form $\Delta w(n) = \frac{h_3}{n}$.
\end{itemize}
where $h_1, h_2, h_3$ can be adjusted to change the rate of decay of the bids $w_{li}$. 

\begin{rem}
The fluctuation decay function can be included in the UE or the eNodeB Algorithm.
\end{rem}
In our model, we add the decay part to the UE Algorithm as shown in Algorithm \ref{alg:UE_FK}.

\begin{algorithm}[tb]
\caption{The $l^{th}$ eNodeB Algorithm}\label{alg:eNodeB_FK}
\begin{algorithmic}
\LOOP
	\STATE {Receive bids $w_{li}(n)$ from UEs}
	\COMMENT{Let $w_{li}(0) = 0\:\:\forall i$}
			\IF {$|w_{li}(n) -w_{li}(n-1)|<\delta  \:\:\forall i$} %
	   		\STATE {Allocate rates, $r_{li}^{\text{opt}}=\frac{w_{li}(n)}{p_l(n)}$ to $i^{th}$ UE}
	   		\STATE {STOP}
		\ELSE
	\STATE {Calculate $p_l(n) = \frac{\sum_{i=1}^{M}w_{li}(n)}{R_l}$}
	\STATE {Send new shadow price $p_l(n)$ to all UEs}
	\ENDIF
\ENDLOOP
\end{algorithmic}
\end{algorithm}

\section{Simulation Results}\label{sec:sim}

Algorithm \ref{alg:UE_FK} and \ref{alg:eNodeB_FK} were applied to various logarithmic and sigmoidal-like utility functions with different parameters in MATLAB. The simulation results showed convergence to the global optimal rates. In this section, we present the simulation results for two carriers in a heterogeneous network (HetNet) that consists of one macro cell, one small cell and $12$ active UEs as shown in Figure \ref{fig:RA_system_model}. The UEs are divided into two groups. The $1^{st}$ group of UEs (index $i=\{1,2,3,4,5,6\}$) is located in the macro cell under the coverage area of both the $1^{st}$ carrier (C1) and the $2^{nd}$ carrier (C2) eNodeBs. We use three normalized sigmoidal-like functions that are expressed by equation (\ref{eqn:sigmoid}) with different parameters. The used parameters are $a = 5$, $b=10$ corresponding to a sigmoidal-like function that is an approximation to a step function at rate $r =10$ (e.g. VoIP) and is the utility of UEs with indexes $i=\{1,7\}$, $a = 3$, $b=20$
corresponding to a sigmoidal-like function that is an approximation of an adaptive real-time application with inflection point at rate $r=20$ (e.g. standard definition video streaming) and is the utility of UEs with indexes $i=\{2,8\}$, and $a = 1$,  $b=30$ corresponding to a sigmoidal-like function that is also an approximation of an adaptive real-time application with inflection point at rate $r=30$ (e.g. high definition video streaming) and is the utility of UEs with indexes $i=\{3,9\}$, as shown in Figure \ref{fig:utility}. We use three logarithmic functions that are expressed by equation (\ref{eqn:log}) with $r_{max} =100$ and different $k_i$ parameters which are approximations for delay-tolerant applications (e.g. FTP). We use $k =15$ for UEs with indexes $i=\{4,10\}$, $k =3$ for UEs with indexes $i=\{5,11\}$, and $k = 0.5$ for UEs with indexes $i=\{6,12\}$, as shown in Figure \ref{fig:utility}. A summary is shown in table \ref{table:parameters}. A three dimensional view of the sigmoidal-like utility
function $U_i(r_{1i} +r_{2i})$ is show in Figure \ref{fig:Utility3D}.
\begin{figure}[tb]
\centering
\includegraphics[width=3.5in]{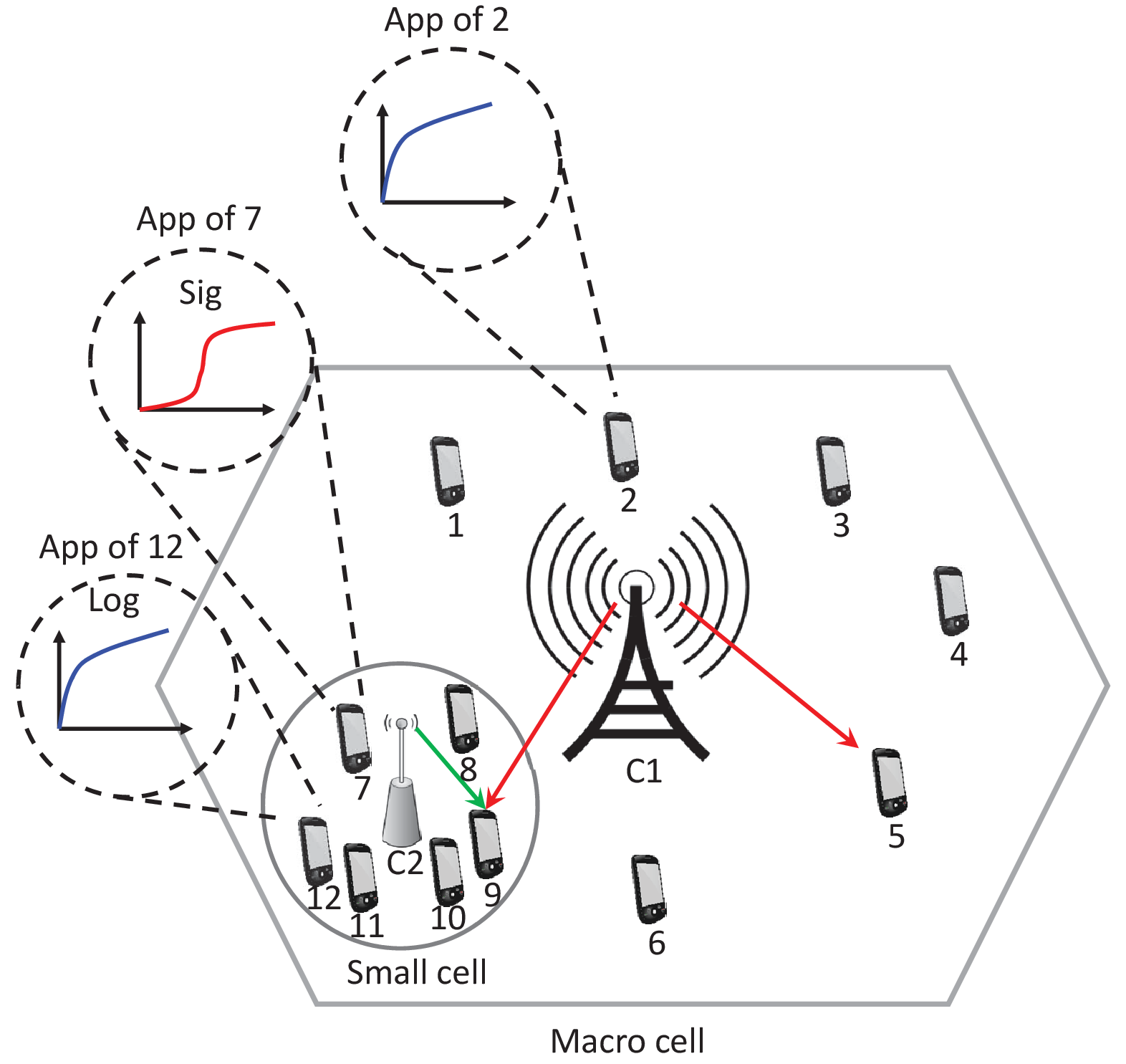}
\caption{System model with two groups of users. The $1^{st}$ group with UE indexes $i =\{ 1,2,3,4,5,6\}$, $2^{nd}$ group with UE indexes $i = \{7,8,9,10,11,12\}$.}
\label{fig:RA_system_model}
\end{figure}
\begin{figure}[tb]
\centering
\includegraphics[width=3.5in]{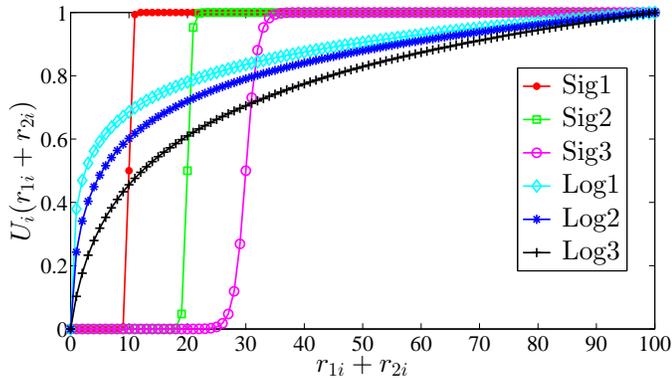}
\caption{The users utility functions $U_i(r_{1i}+r_{2i})$ used in the simulation (three sigmoidal-like functions and three logarithmic functions).}
\label{fig:utility}
\end{figure}

\begin{figure}[tb]
\centering
\includegraphics[width=3.5in]{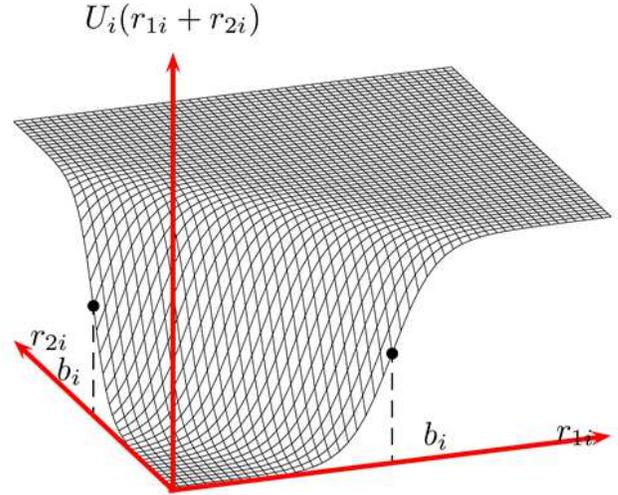}
\caption{The sigmoidal-like utility $U_i(r_{1i} + r_{2i}) = c_i(\frac{1}{1+e^{-a_i(r_{1i} + r_{2i}-b_i)}}-d_i)$ of the $i^{th}$ user, where $r_{1i}$ is the rate allocated by $1^{st}$ carrier eNodeB and $r_{2i}$ is the rate allocated by $2^{nd}$ carrier eNodeB.}
\label{fig:Utility3D}
\end{figure}
\begin {table}[]
\caption {Users and their applications utilities}
\label{table:parameters}
\begin{center}
\renewcommand{\arraystretch}{1.4} 
\begin{tabular}{| l | l | l | }
  \hline
  \multicolumn{2}{|c|}{Applications Utilities Parameters} & \multicolumn{1}{|c|}{Users Indexes} \\  \hline
  Sig1 & Sig $a=5,\:\: b=10$  &  $i=\{1,7\}$ \\ \hline
  Sig2 & Sig $a=3,\:\: b=20$ & $i=\{2,8\}$  \\ \hline
  Sig3 & Sig $a=1,\:\: b=30$ & $i=\{3,9\}$   \\ \hline
  Log1 & Log $k=15,\:\: r_{max}=100$ & $i=\{4,10\}$   \\ \hline
  Log2 & Log $k=3,\:\: r_{max}=100$ & $i=\{5,11\}$   \\ \hline
  Log3 & Log $k=0.5,\:\: r_{max}=100$ & $i=\{6,12\}$ \\ \hline
\end{tabular}
\end{center}
\end {table}
\subsection{Allocated Rates for $30\le R_1\le200$ and $R_2=70$}
In the following simulations, we set $\delta =10^{-3}$, the $1^{st}$ carrier eNodeB rate $R_1$ takes values between $30$ and $200$ with step of $10$, and the $2^{nd}$ carrier eNodeB rate is fixed at $R_2 = 70$. In Figure \ref{fig:ri_versus_R1}, we show the final allocated optimal rates $r_i=r_{1i}+r_{2i}$ of different users with different $1^{st}$ carrier eNodeB total rate $R_1$ and observe how the proposed rate allocation algorithm converges when the eNodeBs available resources are abundant or scarce. In Figure \ref{fig:r1i_versus_R1}, we show the rates allocated to the $1^{st}$ group of UEs by only C1 eNodeB since C2 eNodeB is not within these users range, we observe the increase in the rate allocated to these users with the increase in $R_1$. Figure \ref{fig:r1i+r2i_versus_R1} shows the final allocated rates to the $2^{nd}$ group of UEs by both C1 and C2 eNodeBs. Since these users located under the coverage area of both the macro cell and the small cell, they are allocated rates jointly using the proposed
RA with joint CA approach. Figure \ref{fig:r1i_versus_R1} and \ref{fig:r1i+r2i_versus_R1} show that by using the RA with joint CA algorithm, no user is allocated zero rate (i.e. no user is dropped). However, the majority of the eNodeBs resources are allocated to the UEs running adaptive real-time applications until they reach their inflection rates the eNodeBs then allocate more resources to the UEs with delay-tolerant applications, as real-time application users bid higher than delay-tolerant application users by using the utility proportional fairness policy.

In Figure \ref{fig:ri_versus_R1_SmallCell}, we show the rates allocated to the $2^{nd}$ group users, located under the coverage area of both the macro cell and small cell eNodeBs, by each of the two carriers' eNodeBs with the increase in the $1^{st}$ carrier eNodeB resources. In Figure \ref{fig:r1i_versus_R1_SmallCellUsers} and \ref{fig:r2i_versus_R1_SmallCellUsers}, when the resources available at C2 eNodeB (i.e. $R_2$) is more than that at C1 eNodeB, we observe that most of the $2^{nd}$ group rates are allocated by C2 eNodeB. However, the delay tolerant applications are not allocated much resources since most of $R_2$ is allocated to the real-time applications. With the increase in C1 eNodeB resources $R_1$, we observe a gradual increase in the $2^{nd}$ group rates allocated to real-time applications from C1 eNodeB and a gradual decrease from C2 eNodeB resources allocated to real-time-applications. This shift in the resource allocation increases the available resources in C2 eNodeB to be allocated to $2^{nd}$ group delay tolerant applications by C2 eNodeB.
\begin{figure}[tb]
  \centering
  \subfigure[The rates allocated $r_{1i}$ from the $1^{st}$ carrier eNodeB (i.e. the macro cell eNodeB) to users of the $1^{st}$ group (i.e. $i = 1, 2, 3, 4, 5, 6$).]{%
  \label{fig:r1i_versus_R1}
  \includegraphics[width=3.5in]{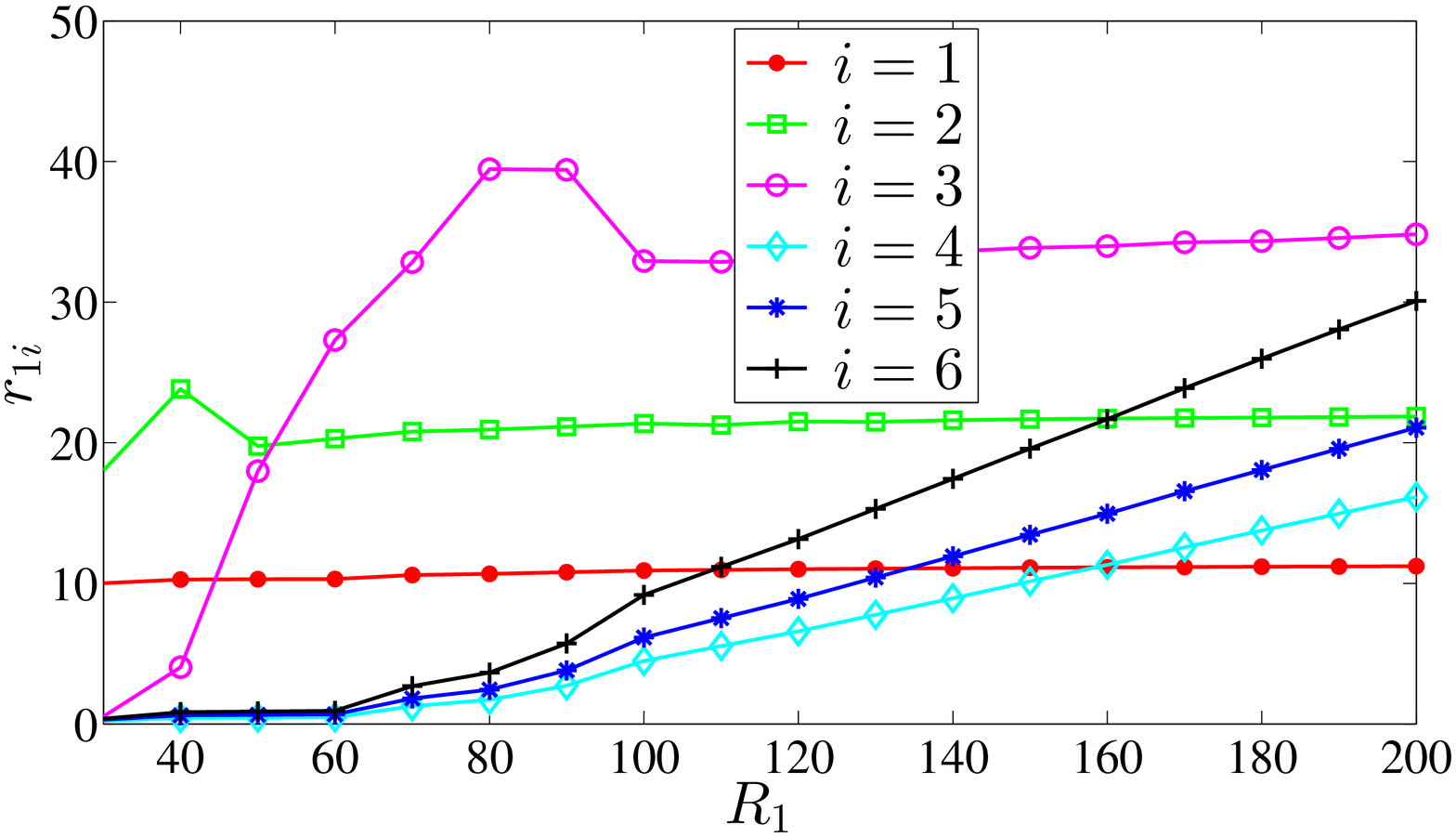}
  }\\%
\subfigure[The rates $r_{1i} + r_{2i}$ allocated from $1^{st}$ and $2^{nd}$ carriers eNodeBs (i.e. the macro cell and the small cell eNodeBs) to users of the $2^{nd}$ group (i.e. $i = 7, 8, 9, 10, 11, 12$).]{%
  \label{fig:r1i+r2i_versus_R1}
  \includegraphics[width=3.5in]{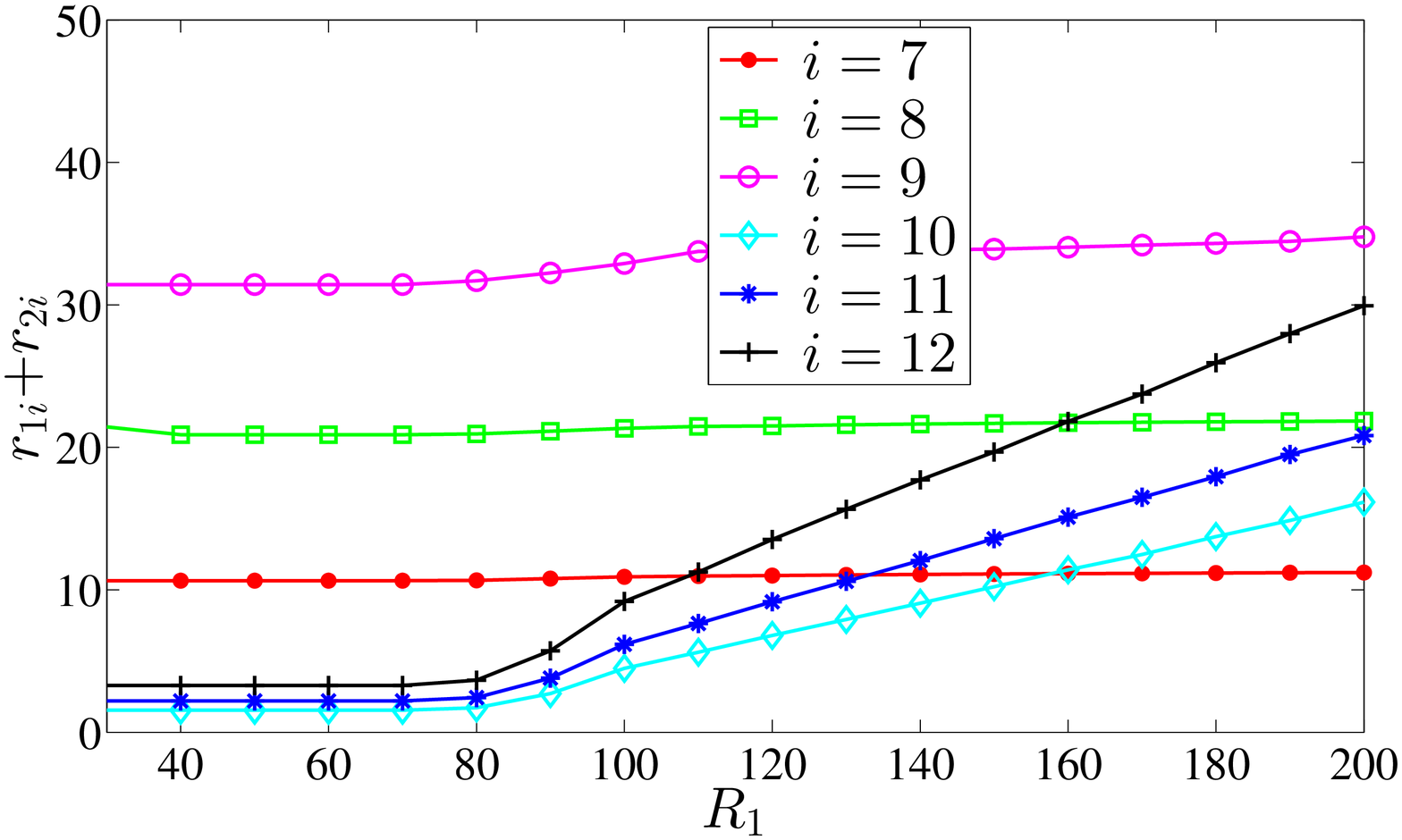}
  }%
\caption{The allocated rates $\sum_{l=1}^{K}r_{li}$ of the two groups of users verses $1^{st}$ carrier rate $30<R_1<200$ with $2^{nd}$ carrier rate fixed at $R_2=70$.}
\label{fig:ri_versus_R1}
\end{figure}

\begin{figure}[tb]
  \centering
  \subfigure[The allocated rates $r_{1i}$ from the $1^{st}$ carrier eNodeB to the $2^{nd}$ group of users.]{%
  \label{fig:r1i_versus_R1_SmallCellUsers}
  \includegraphics[width=3.5in]{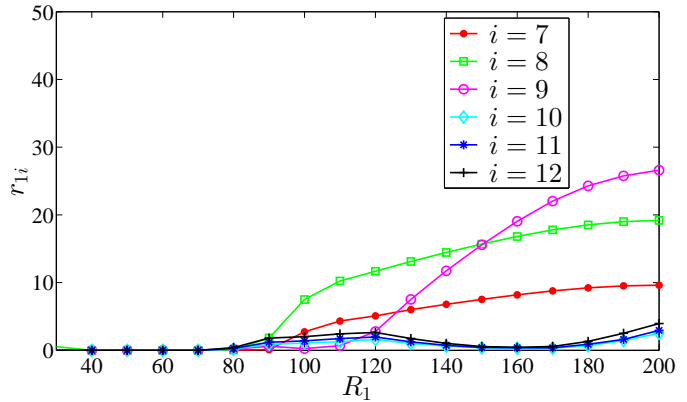}
  }\\%
\subfigure[The allocated rates $r_{2i}$ from the $2^{nd}$ carrier eNodeB to the $2^{nd}$ group of users.]{%
  \label{fig:r2i_versus_R1_SmallCellUsers}
  \includegraphics[width=3.5in]{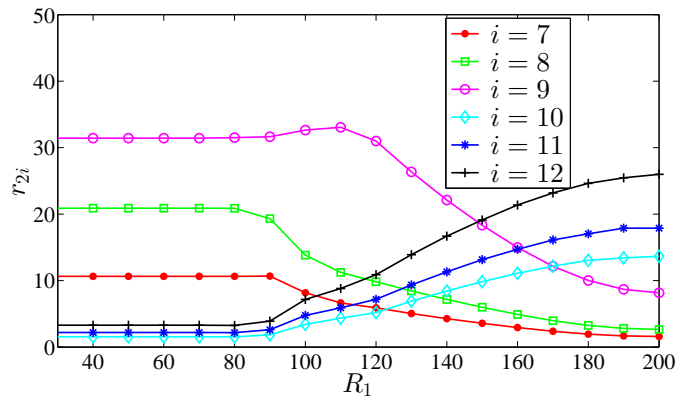}
  }%
\caption{The allocated rates from C1 and C2 eNodeBs to the $2^{nd}$ group of users with $1^{st}$ carrier eNodeB rate $30<R_1<200$ and $2^{nd}$ carrier eNodeB rate fixed at $R_2=70$.}
\label{fig:ri_versus_R1_SmallCell}
\end{figure}

\subsection{Pricing Analysis and Comparison for $30\le R_1\le200$ and $R_2=70$}
In the following simulations, we set $\delta =10^{-3}$ and the $1^{st}$ carrier eNodeB rate $R_1$ takes values between $30$ and $200$ with step of $10$, and C2 eNodeB total rate is fixed at $R_2 = 70$. As discussed before, the users' allocated rates are proportional to the users' bids. Real-time application users bid higher than delay-tolerant application users due to their applications nature and the utility proportional fairness policy. Therefore, the pricing which is proportional to the bids is traffic-dependent, i.e. when the demand by users increases, as a result the price increases and vice versa.

In Figure \ref{fig:Price_TwoMethods}, we compare between the shadow price of C1 and C2 eNodeBs when using the proposed RA with joint CA approach with their shadow prices obtained when using the multi-stage RA with CA approach in \cite{Haya_Utility1,Haya_Utility3,Haya_Utility6}. For the RA with joint CA case, we observe that the shadow price of C1 eNodeB is higher than that of C2 eNodeB for $R_1 < 80$ and approximately equal for $80\leq R_1\leq200$ which shows how it is very efficient to use the joint CA approach for the pricing of the user. We also show how the prices decrease with the increase in the eNodeBs total rate. By using this traffic-dependent pricing, the network providers can flatten the traffic specially during peak hours by setting traffic-dependent resource price, which gives an incentive for users to use the network during less traffic hours. On the other hand, for the multi-stage RA with CA approach, we show in Figure \ref{fig:Price_TwoMethods} the changes in C1 and C2 eNodeBs shadow prices
with $R_1$. When using the multi-stage RA with CA approach, all users are first allocated rates by the macro cell eNodeB, once C1 eNodeB is done allocating its resources C2 eNodeB starts allocating its resources only to the $2^{nd}$ group users as they are located within its coverage area. Since the pricing method in multi-stage RA with CA approach is not optimal, this explains why the shadow prices of C1 and C2 eNodeBs, in Figure \ref{fig:Price_TwoMethods}, when using the proposed RA with joint CA approach are less than their corresponding prices when using the multi-stage RA with CA approach. This shows how the proposed algorithm outperforms the algorithms presented in \cite{Haya_Utility1,Haya_Utility3,Haya_Utility6} as it guarantees that mobile users receive optimal price (minimum) for resources.

\begin{figure}[tb]
\centering
\includegraphics[height=2in, width=3.5in]{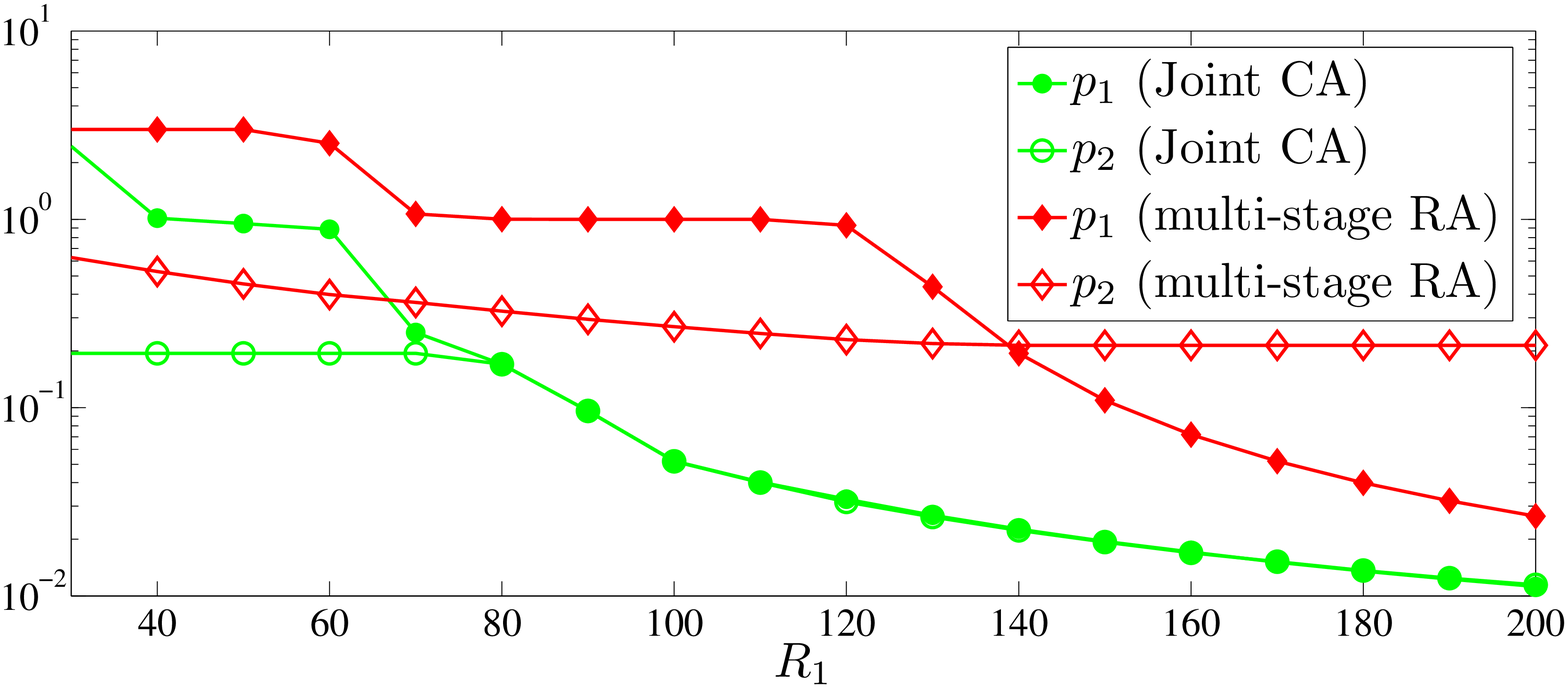}
\caption{The $1^{st}$ carrier shadow price $p_1$ and $2^{nd}$ carrier shadow price $p_2$ for both multi-stage RA with CA and joint RA methods with C1 eNodeB rate $30<R_1<200$ and C2 eNodeB rate $R_2=70$.}
\label{fig:Price_TwoMethods}
\end{figure}

\section{Conclusion}\label{sec:conclude}
In this paper, we introduced a novel resource allocation optimization problem with joint carrier aggregation in cellular networks. We considered mobile users running real-time and delay-tolerant applications with utility proportional fairness allocation policy. We proved that the global optimal solution exists and is tractable for mobile stations with logarithmic and sigmoidal-like utility functions. We presented a novel robust distributed algorithm for allocating resources from different carriers optimally among the mobile users. Our algorithm ensures fairness in the utility percentage achieved by the allocated resources for all users. Therefore, the algorithm gives priority to users with adaptive real-time applications while providing a minimum QoS for all users. In addition, the proposed RA with joint CA algorithm guarantees allocating resources from different carriers with the lowest resource price for the user. We analyzed the convergence of the algorithm with different network traffic densities and
presented a robust algorithm that overcomes the fluctuation in allocation during peak traffic hours. We showed through simulations that our algorithm converges to the optimal resource allocation and that the proposed algorithm outperforms the multi-stage RA with CA algorithms presented in \cite{Haya_Utility1,Haya_Utility3,Haya_Utility6} as it guarantees that mobile users receive optimal price for the allocated resources.

\bibliographystyle{ieeetr}
\bibliography{pubs}
\end{document}